\newtheorem{theorem}{Theorem}[section]
\newtheorem{lemma}[theorem]{Lemma}
\newtheorem{corollary}[theorem]{Corollary}
\theoremstyle{definition}
\newtheorem{definition}[theorem]{Definition}
\theoremstyle{remark}
\newtheorem{remark}[theorem]{Remark}
\numberwithin{equation}{section}
\newcommand{\F}{\mathbb{F}}
\newcommand{\R}{\mathcal{R}}
\newcommand{\Z}{\mathbb{Z}}
\begin{document}
\title[Codes over $\R_{k,m}$ and Applications to Binary self-dual codes]{On codes over $\R_{k,m}$ and constructions for new binary self-dual codes}
\author{Nesibe Tufekci}
\author{Bahattin Yildiz*}\thanks{*The corresponding author}
\address{Department of Mathematics, Fatih University, 34500, Istanbul, Turkey%
}
\email{ nesibe.tufekci@fatih.edu.tr, byildiz@fatih@edu.tr}
\subjclass[2000]{Primary:94B05, Secondary:94B99}
\keywords{extremal self-dual codes, Gray maps, codes over rings, MacWilliams identities}

\begin{abstract}
In this work, we study codes over the ring $\R_{k,m}=\mathbb{F}_{2}[u,v]/$ $\left\langle u^{k},v^{m},uv-vu\right\rangle$, which is a family of Frobenius, characteristic 2 extensions of the binary field. We introduce a distance and duality preserving Gray map from $\R_{k,m}$ to $\F_2^{km}$ together with a Lee weight. After proving the MacWilliams identities for codes over $\R_{k,m}$ for all the relevant weight enumerators, we construct many binary self-dual codes as the Gray images of self-dual codes over $\R_{k,m}$. In addition to many extremal binary self-dual codes obtained in this way, including a new construction for the extended binary Golay code,  we find 175 new Type I binary self-dual codes of parameters [72,36,12] and 105 new Type II binary self-dual codes of parameter [72,36,12].
\end{abstract}

\maketitle

\section{Introduction}
Self-dual codes are an interesting family of codes that have attracted a lot of attention for their connections to many fields of study such as lattices, designs and invariant theory. Many construction methods have been pursued by researchers to find extremal and optimal binary self-dual codes. The works \cite{betsumiya}, \cite{bouyukliev}, \cite{bouyuklieva}, \cite{rdontcheva}, \cite{georgiou}, \cite{gulliverharada}, \cite{doublecirculant}, \cite{melchorgaborit} highlight some of these constructions. The previously mentioned constructions are generally done over finite fields. A recent serious of papers have shown that some suitable rings can be used to obtain binary self-dual codes.

Recall that in \cite{jwood}, the largest family of rings to study for coding theory were found to be the family of Frobenius rings. It was shown that the MacWilliams identities hold for codes over these rings. Codes over rings have been a focal point of research in the last two decades. Many different Frobenius rings were studied within that context for different reasons and motivations, leading to many different results. Among the oft-studied rings we can name $\Z_4$, $\Z_{p^k}$, Galois rings, finite chain rings, $\F_2+v\F_2$, $\F_2+u\F_2+v\F_2+uv\F_2$, $R_k$, etc.

The family of rings named $R_k$, which started with \cite{yildizkaradeniz}, have recently been used quite successfully to construct many good binary self-dual codes. This family of rings have provided an alternate method, to many existing ones, of constructing binary self-dual codes of different automorphism groups, and in many cases codes with new weight enumerators. (see \cite{karadeniz66}, \cite{karadeniz68}, \cite{karadenizyildizaydin}, \cite{BYSK} for example). The common theme in these works is the presence of a duality and distance preserving Gray map and the intricate structure of the ring with a high number of units that lead to large automorphism groups.

In this work, we introduce a generalization of rings of the form $\F_2+u\F_2+ \dots+u^k\F_2$ and $\F_2+u\F_2+v\F_2+uv\F_2$ to a family of rings that we denote by $\R_{k,m}$, where $\R_{k,m}=\mathbb{F}_{2}[u,v]/\left\langle u^{k},v^{m},uv-vu\right\rangle .$ Note that $\R_{1,1} = \F_2$, the binary field; $\R_{2,1} = \F_2+u\F_2$; $\R_{2,2} = \F_2+u\F_2+v\F_2+uv\F_2$ and $\R_{k,1} = \F_2+u\F_2+ \dots+u^{k-1}\F_2$. We establish that this is a Frobenius, characteristic 2, family of rings that is non-chain when $k$ and $m$ are both greater than $1$. We find a duality-preserving Gray map from $\R_{k,m}$ to $\F_2^{km}$, and using some of the common construction methods (the double circulant, bordered double circulant and four circulant constructions) of self-dual codes we find many good binary self-dual codes as the Gray images of self-dual codes over $\R_{k,m}$ for suitable $k$ and $m$. More precisely, we give an alternate construction to the extended Golay code; we find 6 of the 41 extremal binary self-dual codes of length $36$; 2 extremal self-dual binary codes of length $66$; 175 {\bf new} Type I binary self dual codes of parameters $[72,36,12]$ and 105 {\bf new} Type II binary self-dual codes of parameters $[72,36,12]$.

The rest of the work is organized as follows. Section $2$ includes the preliminaries about the structure of the ring. In Section 3, we introduce the Lee weight and the related distance-preserving Gray map, which we prove to be duality-preserving as well. In Section 4, we prove that MacWilliams identities for all the relevant weight enumerators. Section 5 contains our construction methods as well as the computational results, which are tabulated in the end.

\section{Preliminaries}

\subsection{The structure of the ring $\R_{k,m}$}

The ring $\R_{k,m}$ is defined as following for $k\geq m\geq 1$
\begin{equation*}
\begin{tabular}{l}
$\R_{k,m}=\mathbb{F}_{2}[u,v]/\left\langle u^{k},v^{m},uv-vu\right\rangle .$%
\end{tabular}%
\end{equation*}

$\R_{k,m}$ is characteristic $2$ ring of size $2^{km}$. When $k=m=1$ the ring
is $\mathbb{F}_{2}$. When $k=2,m=1$ the ring is $\mathbb{F}_{2}+u\mathbb{F}%
_{2}$ and Type II codes over this ring were studied in \cite%
{doughertygaboritharada}, when $k=m=2$ the ring is $\mathbb{F}_{2}+u\mathbb{F%
}_{2}+v\mathbb{F}_{2}+uv\mathbb{F}_{2}$ and codes over this ring were
studied in \cite{yildizkaradeniz}.

$\R_{k,m}$ can be viewed as an $\mathbb{F}_{2}-$vector space with a basis
\begin{equation*}
\begin{tabular}{l}
$\left\{ u^{i}v^{j}\mid 0\leq i\leq k-1,0\leq j\leq m-1\right\} .$%
\end{tabular}%
\end{equation*}%
Any element of $\R_{k,m}$ can be represented as
\begin{equation}\label{represent}
\begin{tabular}{l}
$\sum\limits_{\substack{ 0\leq i\leq k-1  \\ 0\leq j\leq m-1}}%
c_{ij}u^{i}v^{j},$ $c_{ij}\in \mathbb{F}_{2}$%
\end{tabular}%
\end{equation}%
in a unique way where, addition can be done in a natural way coordinate-wise
addition and multiplication of any two elements can be defined as%
\begin{equation*}
\begin{tabular}{l}
$xy{=}\sum\limits_{_{\substack{ 0\leq r\leq k-1  \\ 0\leq s\leq m-1}}%
}\left( \sum\limits_{_{\substack{ i_{1}+i_{2}=r  \\ j_{1}+j_{2}=s}}%
}c_{i_{1}j_{1}}d_{i_{2}j_{2}}\right) {u}^{r}{v}^{s}.$%
\end{tabular}%
\end{equation*}%
for any $x=\sum\limits_{\substack{ 0\leq i_{1}\leq k-1  \\ 0\leq j_{1}\leq
m-1}}c_{i_{1}j_{1}}u^{i_{1}}v^{j_{1}}$ and $y=\sum\limits_{\substack{ 0\leq
i_{2}\leq k-1  \\ 0\leq j_{2}\leq m-1}}d_{i_{2}j_{2}}u^{i_{2}}v^{j_{2}}\in
\R_{k,m}.$
Note that the sum of indices in the inner sum above is done with respect to modulus $k$ or $m$ where suitable. $\R_{k,m}$ is a finite commutative ring of characteristic $2$, of size $2^{km}$.
One of the important structural properties is to characterize the units and non-units in $\R_{k,m}$. The following lemma takes care of this:

\begin{lemma}\label{units}
An element in $\R_{k,m}$ of the form given in (\ref{represent}) is a unit if and only if $c_{00}$ is $1$.
\end{lemma}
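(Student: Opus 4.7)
The plan is to prove both directions by exploiting the fact that $u$ and $v$ are commuting nilpotents, so the ideal $I = \langle u, v \rangle$ consists entirely of nilpotent elements and coincides exactly with the set of elements whose constant term $c_{00}$ is zero. Concretely, the quotient map $\R_{k,m} \to \R_{k,m}/I \cong \F_2$ sends $\sum c_{ij} u^i v^j$ to $c_{00}$, so the condition ``$c_{00} = 1$'' is exactly the condition ``$x \notin I$''.

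For the ``only if'' direction, I would first observe that every element of $I$ is nilpotent. The cleanest way to see this is to show $I^{k+m-1} = 0$: any monomial appearing in $I^N$ has the form $u^{a_1+\cdots+a_N} v^{b_1+\cdots+b_N}$ with each $a_i + b_i \geq 1$, and whenever $N \geq k+m-1$ a pigeonhole argument forces either $\sum a_i \geq k$ or $\sum b_i \geq m$, which kills the monomial by the relations $u^k = 0$ or $v^m = 0$. Then, if $c_{00} = 0$, the element $x$ lies in $I$ and so satisfies $x^N = 0$ for some $N$; a nilpotent element cannot be a unit, since $xy = 1$ would give $1 = 1^N = (xy)^N = x^N y^N = 0$, a contradiction.

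For the ``if'' direction, assume $c_{00} = 1$ and write $x = 1 + n$ with $n = \sum_{(i,j)\neq(0,0)} c_{ij} u^i v^j \in I$. Choose $N$ with $n^N = 0$; then the standard geometric series identity gives
\begin{equation*}
(1 + n)\bigl(1 + n + n^2 + \cdots + n^{N-1}\bigr) = 1 + n^N = 1,
\end{equation*}
where the last equality uses characteristic $2$ to absorb the usual minus sign. This exhibits an explicit inverse, so $x$ is a unit.

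The only non-routine step is the uniform nilpotency bound for elements of $I$, and that is handled by the pigeonhole argument above; everything else is bookkeeping with the representation in (\ref{represent}) and the observation that $\R_{k,m}/I \cong \F_2$.
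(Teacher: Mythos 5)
Your proof is correct, but it follows a different route from the paper's. The paper exploits characteristic $2$ in an essential way: since squaring is a ring homomorphism, for $x=\sum c_{ij}u^iv^j$ one has $x^{2^n}=\sum c_{ij}\left(u^iv^j\right)^{2^n}$, and choosing $n$ with $2^n\geq k,m$ kills every monomial except the constant, so $x^{2^n}=c_{00}$. This settles both directions in one line: if $c_{00}=1$ then $x$ is a unit with explicit inverse $x^{2^n-1}$, and if $c_{00}=0$ then $x$ is nilpotent, hence a non-unit. Your argument instead is the standard local-ring one: you prove the uniform nilpotency bound $I^{k+m-1}=0$ for $I=\langle u,v\rangle$ by pigeonhole (which is valid -- if $\sum a_t\leq k-1$ and $\sum b_t\leq m-1$ then $\sum(a_t+b_t)\leq k+m-2$, contradicting that each factor contributes at least $1$), and then invert $1+n$ by the geometric series. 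What the paper's approach buys is brevity, at the cost of depending completely on the base field being $\F_2$ (it needs $c_{ij}^{2^n}=c_{ij}$) and on characteristic $2$. What yours buys is generality and extra structural information: it works verbatim over $\F_q[u,v]/\langle u^k,v^m\rangle$ in any characteristic (keeping the alternating signs in the geometric series), and the nilpotency of $\langle u,v\rangle$ that you establish is exactly what underlies the paper's very next lemma, that $\R_{k,m}$ is local with maximal ideal $I_{u,v}$.
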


\begin{proof}
Since the characteristic of the ring is $2$ and $c^{2^n} = c$ for all $c\in \F_2$ and $n\in \Z_+$, we have
\begin{equation*}
\begin{tabular}{ll}
$\left( \sum\limits_{0\leq i+j\leq k+m-2}c_{ij}u^{i}v^{j}\right) ^{2^{n}}$
& ${=}\sum\limits_{0\leq i+j\leq k+m-2}{c}_{ij}\left( u^{i}v^{j}\right)
^{2^{n}}$.
\end{tabular}%
\end{equation*}%
If we choose $n$ so that $2^n\geq k,m$, then the above sum becomes $c_{00}$. Thus, if $c_{00}=1$, this will make the element a unit, while when $c_{00}=0$, it will be a zero divisor and hence a non-unit.
\end{proof}

\begin{lemma}
The ring $\R_{k,m}$ is a local ring with unique maximal ideal $%
I_{u,v}=\left\langle u,v\right\rangle $. This ideal consists of all
non-units and has $\left\vert I_{u,v}\right\vert =\frac{\left\vert
\R_{k,m}\right\vert }{2}.$
\end{lemma}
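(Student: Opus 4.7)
The plan is to use Lemma \ref{units} as a black box: it already identifies the set of units as exactly those elements with $c_{00}=1$, so the set of non-units is exactly $\{\sum c_{ij}u^iv^j : c_{00}=0\}$. I would proceed in three steps.

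First, I would show that $I_{u,v} = \langle u,v\rangle$ coincides with the set of non-units. The inclusion $\langle u,v\rangle \subseteq \{c_{00}=0\}$ is immediate because multiplying any element of $\mathcal{R}_{k,m}$ by $u$ or $v$ produces an element whose constant term is zero (no monomial $u^iv^j$ with $i=j=0$ can appear). Conversely, any element with $c_{00}=0$ can be written as
\[
\sum_{i\geq 1} c_{i0}u^i + \sum_{j\geq 1}\sum_{i\geq 0} c_{ij}u^iv^j = u\Bigl(\sum_{i\geq 1}c_{i0}u^{i-1}\Bigr) + v\Bigl(\sum_{j\geq 1}\sum_{i\geq 0}c_{ij}u^iv^{j-1}\Bigr),
\]
which lies in $\langle u,v\rangle$. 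Combined with Lemma \ref{units}, this shows $I_{u,v}$ is precisely the set of non-units.

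Second, I would invoke the standard fact that in a commutative ring, if the set of non-units forms an ideal, then the ring is local and that ideal is the unique maximal ideal. To verify the ideal axioms for $I_{u,v}$ directly (rather than via $\langle u,v\rangle$): it is closed under addition because the constant term of a sum is the sum of constant terms (so $c_{00}=0$ is preserved in characteristic $2$), and it absorbs multiplication by any element because the constant term of a product $xy$ equals $c_{00}(x)\,c_{00}(y)$, which is zero whenever $c_{00}(x)=0$. Uniqueness of the maximal ideal then follows: any proper ideal must consist of non-units (else it contains $1$), hence is contained in $I_{u,v}$.

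Finally, I would compute $|I_{u,v}|$ by noting that elements of $I_{u,v}$ are parameterized by the $km-1$ free coefficients $\{c_{ij} : (i,j)\neq(0,0)\}$, each in $\mathbb{F}_2$, giving $|I_{u,v}|=2^{km-1} = |\mathcal{R}_{k,m}|/2$. There is no real obstacle here since Lemma \ref{units} has already done the structural work; the only thing to be careful about is making the identification $I_{u,v} = \{c_{00}=0\}$ explicit, as the definition of $\langle u,v\rangle$ is generator-based while the unit criterion is coefficient-based.
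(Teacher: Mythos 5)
Your proposal is correct and follows essentially the same route as the paper: both rest on Lemma \ref{units} to identify the non-units as the elements with $c_{00}=0$, equate this set with $\left\langle u,v\right\rangle$, and count $2^{km-1}$ such elements. The paper's own proof is just a terser version of yours---it leaves the inclusion $\left\langle u,v\right\rangle \subseteq \{c_{00}=0\}$ and the standard ``non-units form an ideal $\Rightarrow$ local'' criterion implicit, which you spell out.
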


\begin{proof}
Clearly, all non-units are in $I_{u,v}$ from Lemma \ref{units}. Since those are the
elements with $c_{00}=0$, we have that the cardinality of the ideals is half
the cardinality of ring.
\end{proof}

The maximal ideal $I_{u,v}$ is not generated by a single element, so the
ring $\R_{k,m}$ is not a principal ideal ring for $m>1$. Moreover, the ring is
a finite chain ring if $m=1$. Let us consider ideals $%
I_{u}=\left\langle u\right\rangle $ and $I_{v}=\left\langle v\right\rangle $
which are contained in $I_{u,v}$ but they are not related via inclusion. That is,
the ring is not a chain ring for $m>1$. One can also observe that $%
\R_{k,m}/Rad(\R_{k,m})\simeq Soc(\R_{k,m})$ since $Rad(\R_{k,m})=I_{u,v}$ and
that $Soc(\R_{k,m})=I_{u^{k-1}v^{m-1}}$. Thus the ring $\R_{k,m}$ is a Frobenius
ring.

\subsection{Linear codes over $\R_{k,m}$}

A linear code $C$ of length $n$ over $\R_{k,m}$ is defined in the usual terms as an $\R_{k,m}$-submodule of $%
\R_{k,m}^{n}$. Define the standard Euclidean inner product on $\R_{k,m}$, that is for $a=(a_{1},a_{2},\ldots
a_{n})$ and  $b=(b_{1},b_{2},\ldots b_{n})\in \R_{k,m}^{n},$ let
\begin{equation*}
\begin{tabular}{l}
$\left\langle a,b\right\rangle =\sum\limits_{i=1}^{n}a_{i}b_{i}.$%
\end{tabular}%
\end{equation*}%
where the operations are performed in the ring $\R_{k,m}.$ The duality for codes over $\R_{k,m}$ then can be defined naturally:
\begin{definition}
Let $C$ be a linear code over $\R_{k,m}$ of length $n$, then we define the
dual of $C$ as
\begin{equation*}
\begin{tabular}{l}
$C^{\perp }:=\left\{ \overline{b}\in \R_{k,m}^{n}\mid \left\langle \overline{b%
},\overline{a}\right\rangle =0,\forall \overline{a}\in C\right\} .$%
\end{tabular}%
\end{equation*}
\end{definition}

\begin{definition}
Let $C$ be a linear code over $\R_{k,m}$ of length $n.$ $C$ is said to be
self-orthogonal if $C\subseteq C^{\perp },$ and self-dual if $C=C^{\perp }$.
\end{definition}

Since $\R_{k,m}$ is a Frobenius ring, by the results in \cite{jwood}, we have
the following lemma:
\begin{lemma}
Any linear code $C$ over $\R_{k,m}^{n}$ satisfies $\left\vert C\right\vert
.\left\vert C^{\perp }\right\vert =\left\vert \R_{k,m}\right\vert ^{n}$
\end{lemma}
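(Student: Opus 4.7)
The plan is to use the standard character-theoretic proof that Wood established for codes over arbitrary Frobenius rings, specialized to our ring $\R_{k,m}$. Since we have already noted that $\R_{k,m}$ is Frobenius with one-dimensional socle $I_{u^{k-1}v^{m-1}}$, the key ingredient is a \emph{generating character} $\chi\colon (\R_{k,m},+)\to\mathbb{C}^{\times}$, i.e.\ an additive character whose kernel contains no nonzero ideal of $\R_{k,m}$. Because $\R_{k,m}$ has characteristic $2$, I would define $\chi$ explicitly on the $\F_{2}$-basis $\{u^{i}v^{j}\}$ by setting $\chi(u^{k-1}v^{m-1})=-1$ and $\chi(u^{i}v^{j})=1$ for every other basis element, then extending additively.

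Next I would verify that $\chi$ is indeed generating. Every nonzero ideal $I$ of the local ring $\R_{k,m}$ contains a minimal ideal, and because the socle is simple and equals $I_{u^{k-1}v^{m-1}}$, every minimal ideal coincides with $\F_{2}\cdot u^{k-1}v^{m-1}$. Hence $I$ contains $u^{k-1}v^{m-1}$, on which $\chi$ takes the value $-1$, so $I\not\subseteq\ker\chi$. With a generating character in hand, for each $b\in\R_{k,m}^{n}$ one defines a character $\chi_{b}(a):=\chi(\langle b,a\rangle)$ on $\R_{k,m}^{n}$. The map $b\mapsto\chi_{b}$ is an additive homomorphism into $\widehat{\R_{k,m}^{n}}$; injectivity follows from the generating property (if $\chi_{b}\equiv 1$ then $\chi(b_{i}r)=1$ for all $r\in\R_{k,m}$ and all $i$, forcing $b_{i}\R_{k,m}\subseteq\ker\chi$ and hence $b_{i}=0$), and equality of cardinalities promotes it to an isomorphism.

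Under this isomorphism $C^{\perp}$ corresponds exactly to the annihilator subgroup $\{\psi\in\widehat{\R_{k,m}^{n}}:\psi|_{C}=1\}$, and for any subgroup of a finite abelian group this annihilator has order $|\R_{k,m}^{n}|/|C|$ by standard Pontryagin duality. Combining these observations yields $|C|\cdot|C^{\perp}|=|\R_{k,m}|^{n}$. The principal obstacle is the verification that the chosen $\chi$ is generating; once this is done, everything else is standard finite-abelian-group duality. The Frobenius hypothesis enters only through the structural fact that $\mathrm{Soc}(\R_{k,m})$ is a simple principal ideal generated by $u^{k-1}v^{m-1}$, which was already recorded in the discussion preceding the lemma.
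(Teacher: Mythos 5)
Your proposal is correct, but it does substantially more work than the paper, which offers no proof of this lemma at all: the paper establishes in Section 2 that $Rad(\R_{k,m})=I_{u,v}$, that $Soc(\R_{k,m})=I_{u^{k-1}v^{m-1}}$, and that $\R_{k,m}/Rad(\R_{k,m})\simeq Soc(\R_{k,m})$, concludes from this that $\R_{k,m}$ is Frobenius, and then simply invokes Wood's theorem from \cite{jwood}, which gives $\left\vert C\right\vert\cdot\left\vert C^{\perp}\right\vert=\left\vert \R_{k,m}\right\vert^{n}$ over any finite Frobenius ring. What you have done is unpack Wood's character-theoretic proof and specialize it to this ring: an explicit generating character, the isomorphism $b\mapsto\chi_{b}$ of $\R_{k,m}^{n}$ onto its character group, the identification of $C^{\perp}$ with the annihilator of $C$, and the count $\left\vert \R_{k,m}\right\vert^{n}/\left\vert C\right\vert$ for that annihilator. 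Incidentally, the paper does construct a generating character later (Section 4, for the MacWilliams identities), namely $\chi(\sum c_{ij}u^{i}v^{j})=(-1)^{w_{H}(c)}$, which is $-1$ on every basis monomial rather than only on $u^{k-1}v^{m-1}$; both choices work, and your verification (every nonzero ideal contains the minimal ideal $\F_{2}\cdot u^{k-1}v^{m-1}$, on which $\chi$ is nontrivial) is exactly the argument the paper uses there. The trade-off is the usual one: the citation route is short and makes clear that only the Frobenius property matters, while yours is self-contained and shows where that property actually enters. One step you should spell out, since it is the crux and not merely ``standard Pontryagin duality'': a priori you only get $C^{\perp}\subseteq\{b:\chi_{b}|_{C}\equiv 1\}$, which yields the inequality $\left\vert C\right\vert\cdot\left\vert C^{\perp}\right\vert\leq\left\vert \R_{k,m}\right\vert^{n}$. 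For the reverse inclusion you need the same ideal trick you used for injectivity: if $\chi_{b}|_{C}\equiv 1$, then $\{\langle b,a\rangle : a\in C\}$ is an ideal of $\R_{k,m}$ (here the $\R_{k,m}$-module structure of $C$, not just its additive structure, is essential) contained in $\ker\chi$, hence zero, so $b\in C^{\perp}$. With that sentence added, your argument is a complete and correct proof of the lemma.
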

A self-dual code will be called Type II if the weights of all codewords are divisible by $4$, otherwise they will be called Type I.

\section{The Lee weight and the Gray map on $\R_{k,m}$}
Our goal in this section is to define a Lee weight for codes over the ring $\R_{k,m}$ and a corresponding Gray map that is distance preserving and more importantly (for the purpose of our work) duality-preserving. In doing so, we will first define these concepts on $\R_{k,1}$ and then inductively extend them over to $\R_{k,m}$.

We define the following linear map which takes a linear code over $%
\R_{k,1}$ of length $n$ to a binary linear code of length $kn$.

\begin{definition}
Take an element $\bar{a}=\overline{a}_{0}+\overline{a}_{1}u+\overline{a}_{2}%
u^{2}+\cdots +\overline{a}_{k-2}u^{k-2}+\overline{a}_{k-1}u^{k-1}$ of $%
(\R_{k,1})^{n}$, where $\overline{a}_{i} \in \F_2^n$. Then define the Gray map $\phi_{k1}$ from $(\R_{k,1})^{n}$
to $(\mathbb{F}_{2})^{kn}$ as follows: when $k$ is even let
\begin{equation*}
\begin{tabular}{ll}
$\phi _{k1}(\bar{a})=$ & $(\overline{a}_{0}+\overline{a}_{1}+\cdots +%
\overline{a}_{k-2}+\overline{a}_{k-1},\overline{a}_{1}+\cdots +\overline{%
a}_{k-2}+\overline{a}_{k-1},$ \\
& $\overline{a}_{1}+\cdots +\overline{a}_{k-2},\cdots ,\overline{a}_{\frac{k}{%
2}-1}+\overline{a}_{\frac{k}{2}}+\overline{a}_{\frac{k}{2}+1},\overline{a}_{%
\frac{k}{2}-1}+\overline{a}_{\frac{k}{2}},\overline{a}_{\frac{k}{2}})$
\end{tabular}%
\end{equation*}%
and when $k$ is odd let
\begin{equation*}
\begin{tabular}{ll}
$\phi_{k1}(\bar{a})=$ & $(\overline{a}_{0}+\overline{a}_{1}+\cdots +%
\overline{a}_{k-2}+\overline{a}_{k-1},\overline{a}_{1}+\cdots +\overline{%
a}_{k-2}+\overline{a}_{k-1},$ \\
& $\overline{a}_{1}+\cdots +\overline{a}_{k-2},\cdots ,\overline{a}_{\frac{k-3%
}{2}}+\overline{a}_{\frac{k-1}{2}}+\overline{a}_{\frac{k+1}{2}},\overline{%
a}_{\frac{k-1}{2}}+\overline{a}_{\frac{k+1}{2}},\overline{a}_{\frac{k-1}{2}}%
).$%
\end{tabular}%
\end{equation*}
\end{definition}

To preserve distance, we define the Lee weight of an element $%
a=a_{0}+a_{1}u+\cdots +a_{k-1}u^{k-1}$ of $\R_{k,1}$ as $w_{L}(a)=w_{H}(\phi
_{k1}(a))$ where $w_{H}$ denotes the usual Hamming weight.

With these definitions, it is obvious that $\phi_{k1}$ is a distance preserving linear isometry from $\R_{k,1}^n$ with the Lee distance to $\F_2^{kn}$ with the Hamming distance. As pointed out earlier, we also want the map to preserve duality, which is proven in the next theorem:

\begin{theorem}
The Gray image of a self-dual code of length $n$ over $\R_{k,1}$ is a binary
self-dual code of length $kn$.
\end{theorem}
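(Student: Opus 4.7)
The plan is to establish two things: that $\phi_{k1}$ is an $\F_2$-linear bijection (so that the Gray image has the correct cardinality), and that orthogonality over $\R_{k,1}$ forces orthogonality of the Gray images in $\F_2^{kn}$. Linearity is clear from the definition. Bijectivity reduces to the single-block map $\F_2^k\to\F_2^k$ sending $(\overline{a}_0,\ldots,\overline{a}_{k-1})$ to the $k$ coordinates of $\phi_{k1}$: the last coordinate is $\overline{a}_{\lfloor k/2\rfloor}$, and back-substitution recovers the remaining $\overline{a}_i$ one at a time. Hence for a self-dual $C$ over $\R_{k,1}$, the cardinality lemma for codes over the Frobenius ring $\R_{k,1}$ gives $|C|=2^{kn/2}$ and so $|\phi_{k1}(C)|=2^{kn/2}$, the correct size for a binary self-dual code of length $kn$. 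It therefore suffices to show $\phi_{k1}(C)\subseteq\phi_{k1}(C)^{\perp}$.

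Fix $a=\sum_{i=0}^{k-1}a_i u^i$ and $b=\sum_{j=0}^{k-1}b_j u^j$ in $\R_{k,1}^n$ with $a_i,b_j\in\F_2^n$. Since $u^k=0$ in the ring, $\langle a,b\rangle=\sum_{\ell=0}^{k-1}S_\ell u^\ell$ with $S_\ell=\sum_{i+j=\ell}\langle a_i,b_j\rangle$, where $\langle\cdot,\cdot\rangle$ on the right is the standard inner product on $\F_2^n$. Thus $\langle a,b\rangle=0$ in $\R_{k,1}$ forces $S_\ell=0$ for every $\ell$, and in particular
\begin{equation*}
\sum_{\ell=0}^{k-1}S_\ell=\sum_{\substack{0\le i,j\le k-1\\ i+j\le k-1}}\langle a_i,b_j\rangle=0.
\end{equation*}

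On the Gray-image side, reading off the definition shows each block of $\phi_{k1}(a)$ has the shape $A_\ell=\sum_{i\in I_\ell}a_i$, where the $I_\ell\subseteq\{0,\ldots,k-1\}$ are the intervals obtained from $\{0,\ldots,k-1\}$ by alternately peeling off the left and right endpoints, i.e.\ $I_{2s}=\{s,s+1,\ldots,k-1-s\}$ and $I_{2s+1}=\{s+1,s+2,\ldots,k-1-s\}$. Expanding the Hamming inner product,
\begin{equation*}
\langle\phi_{k1}(a),\phi_{k1}(b)\rangle=\sum_{\ell=0}^{k-1}\langle A_\ell,B_\ell\rangle=\sum_{i,j}N_{ij}\langle a_i,b_j\rangle,
\end{equation*}
where $N_{ij}\equiv|\{\ell:i,j\in I_\ell\}|\pmod 2$. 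The theorem reduces to the combinatorial claim that $N_{ij}\equiv 1\pmod 2$ if and only if $i+j\le k-1$.

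This combinatorial identity is the main technical obstacle. To prove it I would argue directly: assuming without loss of generality $i\le j$, the condition $i,j\in I_\ell$ is equivalent to $\lceil\ell/2\rceil\le i$ and $\lfloor\ell/2\rfloor\le k-1-j$; summing separately over even and odd $\ell\in\{0,\ldots,k-1\}$ and running a short case analysis on the parities of $k$ and $i$ pins down $|\{\ell:i,j\in I_\ell\}|\bmod 2$ as announced. Granted this, $\langle\phi_{k1}(a),\phi_{k1}(b)\rangle=\sum_{i+j\le k-1}\langle a_i,b_j\rangle=\sum_\ell S_\ell$, which vanishes whenever $\langle a,b\rangle=0$. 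Therefore $\phi_{k1}(C)\subseteq\phi_{k1}(C)^{\perp}$, and combined with the cardinality count $\phi_{k1}(C)$ is self-dual.
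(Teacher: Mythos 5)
Your proof is correct, and at its core it follows the same strategy as the paper's: show that the binary inner product of Gray images equals $\sum_{i+j\le k-1}\langle a_i,b_j\rangle$, which vanishes whenever $\langle a,b\rangle =0$ in $\R_{k,1}$, then finish with a cardinality count. The differences are in execution, and they cut both ways. Where the paper expands $\phi_{k1}(a)\cdot\phi_{k1}(b)$ and dismisses the cross terms with the phrase ``after some cancellations because of the characteristic being $2$,'' you isolate the actual content of that cancellation as a parity lemma about nested intervals: $N_{ij}=|\{\ell: i,j\in I_\ell\}|$ is odd if and only if $i+j\le k-1$. You only sketch its verification, but the sketch is sound and easily completed: for $i\le j$ there are $\min(i,k-1-j)+1$ admissible even $\ell$ and, when $i\ge 1$, $\min(i-1,k-1-j)+1$ admissible odd $\ell$, so the total is $2i+1$ when $i+j\le k-1$ and $2(k-j)$ otherwise; moreover your intervals $I_{2s}=\{s,\ldots,k-1-s\}$, $I_{2s+1}=\{s+1,\ldots,k-1-s\}$ do match the paper's map for both parities of $k$. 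This is, if anything, more rigorous than what the paper prints, and you additionally justify injectivity of $\phi_{k1}$ by back-substitution, which the paper only asserts implicitly when it claims $|\phi_{k1}(C)|=|C|$. What the paper's route buys that yours does not: by applying orthogonality preservation to $C^{\perp}$ and invoking the Frobenius order formula $|C|\cdot|C^{\perp}|=|\R_{k,1}|^{n}$, the paper derives the stronger identity $\phi_{k1}(C^{\perp})=\phi_{k1}(C)^{\perp}$ for an \emph{arbitrary} linear code $C$, and it is precisely this stronger statement that is cited again later in the proof of the Lee-weight MacWilliams identity. Your closure, self-orthogonality plus $|\phi_{k1}(C)|=2^{kn/2}$, proves the theorem as stated but does not yield that reusable duality-preservation identity.
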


\begin{proof}
First, we prove that Gray images of orthogonal codewords in $\R_{k,1}$
are orthogonal in $\mathbb{F}_{2}.$\ That is, we shall show that
\begin{equation*}
\begin{tabular}{l}
$\left\langle \overline{a},\overline{b}\right\rangle =0\Rightarrow \phi_{k1}(%
\overline{a}).\phi_{k1}(\overline{b})=0$%
\end{tabular}%
\end{equation*}%
for all $\overline{a},\overline{b}$ $\in $ $\R_{k,1}^{n}.$ Let us assume
that $\overline{a}=\sum\limits_{i=0}^{k-1}\overline{a}_{i}u^{i}$ and $%
\overline{b}=\sum\limits_{j=0}^{k-1}\overline{b}_{j}u^{j}$. Then we see
that
\begin{equation}\label{eq}
\begin{tabular}{l}
$\left\langle \overline{a},\overline{b}\right\rangle =0\Leftrightarrow
\sum\limits_{i=0}^{k-1}\overline{a}_{i}u^{i}.\sum\limits_{j=0}^{k-1}%
\overline{b}_{j}u^{j}=0\Leftrightarrow \sum\limits_{i+j=0}^{k-1}\overline{%
a}_{i}\overline{b}_{j}=0.$%
\end{tabular}%
\end{equation}%
Now, since%
\begin{equation*}
\begin{tabular}{l}
${\ \phi_{k1}(}\overline{a}{\ )=(}\sum\limits_{i=0}^{k-1}\overline{a}_{i}{\ ,}%
\sum\limits_{i=1}^{k-1}\overline{a}_{i}{\ ,}\sum\limits_{i=1}^{k-2}%
\overline{a}_{i}{\ ,\cdots ,}\sum\limits_{i=\frac{k}{2}-1}^{\frac{k}{2}+1}%
\overline{a}_{i}{\ ,}\sum\limits_{i=\frac{k}{2}-1}^{\frac{k}{2}}\overline{%
a}_{i}{\ ,}\sum\limits_{i=\frac{k}{2}}^{\frac{k}{2}}\overline{a}_{i}{\ ),}$
\\
${\ \phi_{k1} (}\overline{b}{\ )=(}\sum\limits_{i=0}^{k-1}\overline{b}_{i}{\ ,}%
\sum\limits_{i=1}^{k-1}\overline{b}_{i}{\ ,}\sum\limits_{i=1}^{k-2}%
\overline{b}_{i}{\ ,\cdots ,}\sum\limits_{i=\frac{k}{2}-1}^{\frac{k}{2}+1}%
\overline{b}_{i}{\ ,}\sum\limits_{i=\frac{k}{2}-1}^{\frac{k}{2}}\overline{%
b}_{i}{\ ,}\sum\limits_{i=\frac{k}{2}}^{\frac{k}{2}}\overline{b}_{i}{\ )}$%
\end{tabular}%
\end{equation*}%
we get, after some cancellations because of the characteristic being $2$,%
\begin{equation*}
\begin{tabular}{ll}
$\phi _{k1}(\overline{a}).\phi _{k1}(\overline{b})$ & $=\sum%
\limits_{i=0}^{k-1}\overline{a}_{i}\sum\limits_{i=0}^{k-1}\overline{b}_{i}%
+\sum\limits_{i=1}^{k-1}\overline{a}_{i}\sum\limits_{i=1}^{k-1}\overline{%
b}_{i}+\cdots $ \\
& $+\sum\limits_{i=\frac{k}{2}-1}^{\frac{k}{2}}\overline{a}_{i}%
\sum\limits_{i=\frac{k}{2}-1}^{\frac{k}{2}}\overline{b}_{i}+\sum\limits_{i=%
\frac{k}{2}}^{\frac{k}{2}}\overline{a}_{i}\sum\limits_{i=\frac{k}{2}}^{%
\frac{k}{2}}\overline{b}_{i}$ \\
& $=\overline{a}_{0}\sum\limits_{i=0}^{k-1}\overline{b}_{i}+\overline{b}_{0}%
\sum\limits_{i=1}^{k-1}\overline{a}_{i}+\overline{a}_{1}\sum%
\limits_{i=1}^{k-2}\overline{b}_{i}+\overline{b}_{1}\sum\limits_{i=2}^{k-2}%
\overline{a}_{i}+\cdots $ \\
& $+\overline{a}_{\frac{k}{2}-1}\sum\limits_{i=\frac{k}{2}-1}^{\frac{k}{2}}%
\overline{b}_{i}+\overline{b}_{\frac{k}{2}-1}\sum\limits_{i=\frac{k}{2}}^{%
\frac{k}{2}}\overline{a}_{i}.$%
\end{tabular}%
\end{equation*}%
One can see that this last sum is exactly equal to the right-most sum in (\ref{eq}) which is equal to $0$. This shows us
\begin{equation}\label{dualityinclusion}
\begin{tabular}{l}
$\phi _{k1}(C^{\perp })\subset \phi _{k1}(C)^{\perp }.$%
\end{tabular}
\end{equation}
But, by the definition of $\phi _{k1}$, $\phi _{k1}(C)$ is a binary linear
code of length $kn$ of size $\left\vert C\right\vert .$ Both $\mathbb{F}_{2}$
and $\R_{k,1}$ are Frobenius, so we have%
\begin{equation*}
\begin{tabular}{l}
$\left\vert \phi _{k1}(C^{\perp })\right\vert =\left\vert C^{\perp
}\right\vert =\frac{\left\vert \R_{k,1}\right\vert ^{n}}{\left\vert
C\right\vert }=\frac{2^{kn}}{\left\vert \phi _{k1}(C)\right\vert }%
=\left\vert \phi _{k1}(C)^{\perp }\right\vert .$%
\end{tabular}%
\end{equation*}%
Combining this with (\ref{dualityinclusion}), we get
\begin{equation}\label{self-duality}
\phi _{k1}(C^{\perp })=\phi _{k1}(C)^{\perp }.
\end{equation}
\end{proof}

Because of the distance-preserving property of the Gray map we get the following important corollary:
\begin{corollary}\label{dualweight}
Let $C$ be a self-dual code over $\R_{k,1}$ of length $n$. Then $\phi_{k1}(C)$ is a binary self-dual code of length $kn$. Moreover the Lee weight distribution of $C$ is the same as the Hamming weight distribution of $\phi_{k1}(C)$.
\end{corollary}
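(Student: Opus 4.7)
The first sentence of the corollary is literally the conclusion of Theorem 3.2, so there is nothing new to do there; I would simply cite it. The interesting content is the claim about weight distributions, and the plan is to derive it from two facts already in place: (i) the definition of the Lee weight, $w_L(a) = w_H(\phi_{k1}(a))$ for $a \in \R_{k,1}$, extended coordinate-wise to $w_L(\bar a) = w_H(\phi_{k1}(\bar a))$ for $\bar a \in \R_{k,1}^n$; and (ii) the fact that $\phi_{k1}$ is an injective (indeed bijective) linear map $\R_{k,1}^n \to \F_2^{kn}$.

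With these in hand the argument is essentially one line: for every codeword $\bar a \in C$, its Lee weight equals the Hamming weight of its image $\phi_{k1}(\bar a) \in \phi_{k1}(C)$, and since $\phi_{k1}$ restricts to a bijection $C \to \phi_{k1}(C)$, the multiset $\{w_L(\bar a) : \bar a \in C\}$ coincides with the multiset $\{w_H(c) : c \in \phi_{k1}(C)\}$. Hence the Lee weight distribution of $C$ and the Hamming weight distribution of $\phi_{k1}(C)$ agree term by term.

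The only point that is not entirely automatic is the injectivity of $\phi_{k1}$, but this is easy: one can recover the coordinates $\overline{a}_i$ from $\phi_{k1}(\bar a)$ by taking successive coordinatewise differences (which in characteristic 2 are sums) of adjacent blocks, so $\phi_{k1}$ has trivial kernel. Combined with $|\phi_{k1}(\R_{k,1}^n)| \le 2^{kn} = |\R_{k,1}^n|$, injectivity upgrades to bijectivity, and in particular the restriction to any subcode $C$ is bijective onto its image.

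I do not foresee any real obstacle here; the proof is essentially a repackaging of Theorem 3.2 and the definition of the Lee weight. The only thing worth being careful about is emphasizing that $\phi_{k1}$ is a bijection, not merely a distance-preserving linear map, so that the Lee and Hamming weight \emph{distributions} (with multiplicities) really do match.
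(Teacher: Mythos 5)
Your proposal is correct and matches the paper's (implicit) argument: the paper states this corollary without a separate proof, presenting it as an immediate consequence of Theorem 3.2 together with the distance-preserving property of $\phi_{k1}$, i.e.\ exactly the combination of the definition $w_L(\bar a)=w_H(\phi_{k1}(\bar a))$ and the bijectivity of $\phi_{k1}$ onto its image that you spell out. Your explicit check that $\phi_{k1}$ is injective (recovering the $\overline{a}_i$ by sums of adjacent blocks) is a detail the paper leaves as ``obvious,'' but it is the same route, just made precise.
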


Now since $\R_{k,m}$ can be viewed as an $\R_{k,1}-$vector space with a basis $\left\{
1,v,v^{2},\ldots ,v^{m-1}\right\}$, we can write any element of $%
\R_{k,m}$ in the form $c=\sum\limits_{0\leq i\leq m-1}c_{ki}v^{i},$ where $%
c_{ki}\in \R_{k,1}.$ Now we can extend the gray map easily from $\R_{k,1}$
to $\R_{k,m}$:
\begin{equation*}
\begin{tabular}{ll}
$\phi_{km}(c)=$ & $\large{(} \phi_{k1}(\sum\limits_{i=0}^{m-1}\overline{c}_{ki}%
),\phi_{k1}(\sum\limits_{i=1}^{m-1}\overline{c}_{ki}),\phi
_{k1}(\sum\limits_{i=1}^{m-2}\overline{c}_{ki}),$ \\
& $\cdots, \phi_{k1}(\sum\limits_{i=\frac{m}{2}-1}^{\frac{m}{2}+1}%
\overline{c}_{ki}),\phi_{k1}(\sum\limits_{i=\frac{m}{2}-1}^{\frac{m}{2}}%
\overline{c}_{ki}),\phi_{k1}(\sum\limits_{i=\frac{m}{2}}^{\frac{m}{2}}%
\overline{c}_{ki}) \large{)}.$%
\end{tabular}%
\end{equation*}%
We note that the, defining the Lee weight in the same way as the Hamming weight of the image, distance and duality-preserving properties of $\phi_{km}$ can be established in exactly the same way as was done for $\phi_{k1}$. Thus we can extend corollary \ref{dualweight} to the following important theorem which will be used in subsequent sections:
\begin{theorem}\label{dual}
Let $C$ be a self-dual code over $\R_{k,m}$ of length $n$. Then $\phi_{km}(C)$ is a binary self-dual code of length $kmn$. Moreover the Lee weight distribution of $C$ is the same as the Hamming weight distribution of $\phi_{km}(C)$.
\end{theorem}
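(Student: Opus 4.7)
The plan is to mirror the proof strategy of Corollary \ref{dualweight} one level higher, using the $\R_{k,1}$-module decomposition $\R_{k,m}=\R_{k,1}\oplus\R_{k,1}v\oplus\cdots\oplus\R_{k,1}v^{m-1}$ in the variable $v$ exactly the way $\R_{k,1}$ was used as an $\F_2$-module in the variable $u$. The distance-preserving half of the statement is automatic: by construction $w_{L}(\bar c)=w_{H}(\phi_{km}(\bar c))$, and $\phi_{km}$ is an $\F_2$-linear bijection between $\F_2$-spaces of equal dimension, so it sends $C$ bijectively onto $\phi_{km}(C)$ weight-by-weight. Hence the bulk of the work is to establish self-duality of the binary image.

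For the orthogonality step I would write each codeword as $\bar a=\sum_{i=0}^{m-1}\bar a_{(i)}v^{i}$ with $\bar a_{(i)}\in\R_{k,1}^{n}$, and similarly $\bar b=\sum_{j=0}^{m-1}\bar b_{(j)}v^{j}$. Multiplying out in $\R_{k,m}$ gives
\begin{equation*}
\langle\bar a,\bar b\rangle=\sum_{s=0}^{m-1}T_{s}\,v^{s},\qquad T_{s}:=\sum_{i+j=s}\langle\bar a_{(i)},\bar b_{(j)}\rangle_{\R_{k,1}},
\end{equation*}
where indices add modulo $m$, so $\langle\bar a,\bar b\rangle=0$ is equivalent to the vanishing of every $T_{s}$ in $\R_{k,1}$. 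On the other side, the definition of $\phi_{km}$ packages $\phi_{km}(\bar a)$ into $m$ concatenated blocks, each of the form $\phi_{k1}$ applied to an $\R_{k,1}$-linear combination of the $\bar a_{(i)}$ dictated by the outer $v$-Gray map. Expanding $\phi_{km}(\bar a)\cdot\phi_{km}(\bar b)$ block-by-block, pulling the bilinearity of the $\F_2$-dot product through $\phi_{k1}$, and applying the identity extracted from the proof of Corollary \ref{dualweight} that rewrites $\phi_{k1}(\bar\alpha)\cdot\phi_{k1}(\bar\beta)$ in terms of $\langle\bar\alpha,\bar\beta\rangle_{\R_{k,1}}$, the outer sum collapses after characteristic $2$ cancellations to an $\F_2$-combination of the $T_{s}$'s. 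Hence $\langle\bar a,\bar b\rangle=0$ forces $\phi_{km}(\bar a)\cdot\phi_{km}(\bar b)=0$, giving the inclusion $\phi_{km}(C^{\perp})\subseteq\phi_{km}(C)^{\perp}$.

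To upgrade the inclusion to equality I would invoke the Frobenius property of $\R_{k,m}$: since $|C|\cdot|C^{\perp}|=|\R_{k,m}|^{n}=2^{kmn}$ and $\phi_{km}$ is a bijection, $|\phi_{km}(C^{\perp})|=2^{kmn}/|\phi_{km}(C)|=|\phi_{km}(C)^{\perp}|$, forcing $\phi_{km}(C^{\perp})=\phi_{km}(C)^{\perp}$. Specialising to $C=C^{\perp}$ yields self-duality of $\phi_{km}(C)$, and the weight distribution statement follows from distance preservation as noted at the outset. The main obstacle in executing this plan is the bookkeeping in the middle step: the cancellations must be tracked in the outer $v$-direction while simultaneously invoking the inner $\phi_{k1}$-identity, and the even and odd cases of $m$ have to be written out separately, in the same way the even/odd cases of $k$ were treated in Corollary \ref{dualweight}. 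However, each individual cancellation is formally identical to one already performed there, so no new combinatorial idea is required.
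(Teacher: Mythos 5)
Your proposal is correct and is essentially the paper's own argument: the paper gives no separate proof of Theorem \ref{dual}, merely asserting that the duality- and distance-preserving properties of $\phi_{km}$ ``can be established in exactly the same way as was done for $\phi_{k1}$,'' which is precisely the two-level mirroring you describe --- decompose over $\R_{k,1}$ in the variable $v$, expand $\phi_{km}(\overline{a})\cdot\phi_{km}(\overline{b})$ block-by-block through the inner $\phi_{k1}$-identity, cancel in characteristic $2$, and finish with the Frobenius cardinality count $|\phi_{km}(C^{\perp})| = |C^{\perp}| = 2^{kmn}/|C| = |\phi_{km}(C)^{\perp}|$. One small correction: since $v^{m}=0$ (not $v^{m}=1$) in $\R_{k,m}$, the cross terms with $i+j\geq m$ in $\langle \overline{a},\overline{b}\rangle$ vanish outright rather than wrapping around modulo $m$, so each $T_{s}$ is a truncated sum --- this only simplifies your bookkeeping and does not affect the validity of the argument.
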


\section{MacWilliams identities for codes over $R_{k,m}$}

MacWilliams identities give a relation between weight enumerators of a code
and its dual. By Jay Wood's result \cite{jwood}, MacWilliams identities hold for codes over all Frobenius rings. Since $\R_{k,m}$ is Frobenius ring it has a generating character and using this we can prove MacWilliams identities for the complete weight
enumerator, the Hamming weight enumerator and the Lee weight enumerator of codes
over $\R_{k,m}.$

We first give a generating character for $\R_{k,m}$. Let
\begin{equation*}
\begin{tabular}{llll}
$\chi :$ & $(\R_{k,m},+)$ & $\rightarrow $ & $(\left\{ -1,1\right\} ,.)$ \\
& $\sum\limits_{\substack{ 0\leq i\leq k-1  \\ 0\leq j\leq m-1}}%
c_{ij}u^{i}v^{j}$ & $\mapsto $ & $(-1)^{w_{H}(c)}$%
\end{tabular}%
,
\end{equation*}%
where $c=(c_{ij})$ is the vector consisting of all the coefficients $c_{ij}$'s. It is clear that $\chi$ is a character.
\begin{theorem}
$\chi $ is generating character for $\R_{k,m}.$
\end{theorem}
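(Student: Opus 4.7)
The plan is to use the standard characterization of generating characters for finite Frobenius rings: $\chi$ is a generating character if and only if $\ker(\chi)$ contains no nonzero ideal of $R$ (equivalently, $\chi$ is nontrivial on every nonzero ideal). Section~2 of the excerpt already established that $\R_{k,m}$ is a local Frobenius ring with socle $Soc(\R_{k,m}) = \langle u^{k-1}v^{m-1}\rangle = \{0,\, u^{k-1}v^{m-1}\}$. Since the socle of a local Frobenius ring is simple and is in fact the unique minimal ideal, every nonzero ideal of $\R_{k,m}$ contains $Soc(\R_{k,m})$. So the problem reduces to checking that $\chi$ is nontrivial on the socle, i.e., to evaluating $\chi$ at the single nonzero socle element $u^{k-1}v^{m-1}$.

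Before that reduction is applied, I would briefly verify that $\chi$ is indeed a group homomorphism from $(\R_{k,m},+)$ to $(\{-1,1\},\cdot)$. Because the coefficient vector of $c+c'$ is the coordinatewise XOR of the coefficient vectors of $c$ and $c'$, one has $w_H(c+c') \equiv w_H(c) + w_H(c') \pmod 2$, whence $\chi(c+c') = (-1)^{w_H(c)+w_H(c')} = \chi(c)\chi(c')$.

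The crucial computation is then a one-liner. The element $u^{k-1}v^{m-1}$ has coefficient vector with exactly one nonzero entry, namely $c_{k-1,m-1} = 1$, so $w_H(u^{k-1}v^{m-1}) = 1$ and therefore $\chi(u^{k-1}v^{m-1}) = -1 \neq 1$. Combined with the reduction above, this shows that $\chi$ is nontrivial on every nonzero ideal, so $\chi$ is a generating character for $\R_{k,m}$.

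I anticipate no real obstacle; the entire argument rests on structural facts already proven in the excerpt. The only outside fact invoked is that for a local Frobenius ring the socle is simple and is the unique minimal ideal, which is a standard consequence of the Frobenius property and is compatible with the isomorphism $\R_{k,m}/Rad(\R_{k,m}) \simeq Soc(\R_{k,m})$ already noted in Section~2. If the authors prefer a self-contained justification, one can argue directly: for any nonzero $x \in J$, iteratively multiplying by elements of the radical $I_{u,v}$ (which is nilpotent) produces, by a maximal-position argument using commutativity and $u^k = v^m = 0$, the element $u^{k-1}v^{m-1} \in J$.
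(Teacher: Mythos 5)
Your proposal is correct and follows essentially the same route as the paper: both reduce to the fact that every nonzero ideal of the local ring $\R_{k,m}$ contains the minimal ideal $\left\langle u^{k-1}v^{m-1}\right\rangle$, and then check that $\chi(u^{k-1}v^{m-1})=-1\neq 1$, so $\chi$ is nontrivial on every nonzero ideal. The extra details you supply (the homomorphism check and the multiplication argument showing any nonzero ideal contains $u^{k-1}v^{m-1}$) are points the paper leaves implicit, but the underlying argument is identical.
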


\begin{proof}
Since $\chi (0)=1$ and $\chi (u^{k-1}v^{m-1})=-1,$ $\chi $ is
non-trivial when restricted to the minimal ideal. Since every non-zero ideal contains the minimal ideal, $\chi $ is when restricted to any non-zero ideal.
\end{proof}

Let $\R_{k,m}=\left\{ g_{1},g_{2},\ldots ,g_{2^{km}}\right\} $ be the ring.
The complete weight enumerator of a code $C$ over $\R_{k,m}^{n}$ is
\begin{equation*}
\begin{tabular}{l}
$cwe_{C}(\overline{X})=\sum\limits_{\overline{c}\in
C}\prod\limits_{i=1}^{2^{km}}X_{i}^{n_{i}(\overline{c})}$,%
\end{tabular}%
\end{equation*}%
where $n_{i}(\overline{c})$ is the number of occurrences of $g_{i}$ in $%
\overline{c}.$ Let $T$ be the $2^{km}\times 2^{km}$ matrix such that
\begin{equation*}
\begin{tabular}{l}
$T=%
\begin{pmatrix}
\chi (g_{1}g_{1}) & \chi (g_{1}g_{2}) & \cdots & \chi (g_{1}g_{2^{km}}) \\
\chi (g_{2}g_{1}) & \ddots &  & \chi (g_{1}g_{2^{km}}) \\
\vdots &  & \ddots & \vdots \\
\chi (g_{2^{km}}g_{1}) & \chi (g_{2^{km}}g_{2}) & \cdots & \chi
(g_{2^{km}}g_{2^{km}})%
\end{pmatrix}%
.$%
\end{tabular}%
\end{equation*}%
Then we have following theorems by \cite{jwood}:

\begin{theorem}
Let $C$ be linear code over $\R_{k,m}$ and $C^{\perp }$ be its dual. Then we
have the following identity for the complete weight enumerators:%
\begin{equation*}
\begin{tabular}{l}
$cwe_{C^{\perp }}(\overline{X})=\frac{1}{\left\vert C\right\vert }cwe_{C}(T.%
\overline{X}^{t}).$%
\end{tabular}%
\end{equation*}%
Here, $\overline{X}^{t}$ denotes the transpose of $\overline{X}.$
\end{theorem}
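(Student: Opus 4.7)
The plan is to follow the classical Fourier-analytic derivation of the MacWilliams identity, specialized to $\R_{k,m}$ with its generating character $\chi$. The key structural input, already in hand, is that $\chi$ is generating: every additive character of $(\R_{k,m}^{n},+)$ then has the form $\chi_{\overline{a}}(\overline{b}):=\chi(\langle \overline{a},\overline{b}\rangle)$ for a unique $\overline{a}\in \R_{k,m}^{n}$, and the annihilator of $C$ under this identification coincides with the Euclidean dual $C^{\perp}$; consequently $(C^{\perp})^{\perp}=C$ and $|C|\cdot|C^{\perp}|=|\R_{k,m}|^{n}$.

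First I would set
\[
f(\overline{c}) := \prod_{i=1}^{n} X_{\mathrm{ind}(c_{i})}, \qquad \mathrm{ind}(g_{j})=j,
\]
so that $cwe_{C}(\overline{X})=\sum_{\overline{c}\in C}f(\overline{c})$. Since both $f$ and $\chi(\langle\overline{a},\overline{b}\rangle)=\prod_{i}\chi(a_{i}b_{i})$ factor over coordinates, the Fourier transform
\[
\widehat{f}(\overline{a}):=\sum_{\overline{b}\in \R_{k,m}^{n}}\chi(\langle \overline{a},\overline{b}\rangle)\,f(\overline{b})
\]
factors as $\widehat{f}(\overline{a})=\prod_{i=1}^{n}\bigl(\sum_{b\in \R_{k,m}}\chi(a_{i}b)\,X_{\mathrm{ind}(b)}\bigr)$. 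Inspecting the definition of $T$, the inner sum is exactly the $\mathrm{ind}(a_{i})$-th coordinate of $T\,\overline{X}^{t}$; hence $\widehat{f}(\overline{a})$ is $f$ with the variable vector $\overline{X}$ replaced coordinatewise by $T\,\overline{X}^{t}$. Summing over $\overline{a}\in C$ therefore yields $cwe_{C}(T\,\overline{X}^{t})$.

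Second, I would invoke the Poisson summation formula for the finite abelian group $(\R_{k,m}^{n},+)$ with subgroup $C^{\perp}$:
\[
\sum_{\overline{b}\in C^{\perp}} f(\overline{b}) \;=\; \frac{|C^{\perp}|}{|\R_{k,m}|^{n}}\sum_{\overline{a}\in (C^{\perp})^{\perp}} \widehat{f}(\overline{a}).
\]
Using $(C^{\perp})^{\perp}=C$ and $|C^{\perp}|/|\R_{k,m}|^{n}=1/|C|$, and substituting the closed form of $\widehat{f}$ obtained above, the right-hand side becomes $\tfrac{1}{|C|}\,cwe_{C}(T\,\overline{X}^{t})$ while the left-hand side is $cwe_{C^{\perp}}(\overline{X})$, giving the desired identity.

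The main obstacle is nothing deep: it is the careful bookkeeping of the two identifications on which everything rests, namely (i) that $\widehat{\R_{k,m}^{n}}\cong \R_{k,m}^{n}$ via $\overline{a}\mapsto \chi_{\overline{a}}$ once $\chi$ is generating, and (ii) that under this identification the character-group annihilator of $C$ coincides with the Euclidean dual $C^{\perp}$, so that the double-dual identity $(C^{\perp})^{\perp}=C$ holds. Both facts follow from the generating-character property just established together with the standard Frobenius-ring theory of \cite{jwood}; with them in place, the proof reduces to the factorization and Poisson-summation steps above.
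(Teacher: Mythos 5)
Your proof is correct, but it is not the route the paper takes: the paper gives no proof of this theorem at all. Having verified in the preceding theorem that $\chi$ is a generating character (hence that $\R_{k,m}$ is Frobenius), the authors simply quote Wood's general result \cite{jwood}, which asserts the MacWilliams identity for complete weight enumerators over any finite Frobenius ring. What you have done is unfold the proof of that cited result, specialized to $\R_{k,m}$: the coordinatewise factorization of $\widehat{f}$, the identification of $\sum_{b}\chi(a_{i}b)X_{\mathrm{ind}(b)}$ with the $\mathrm{ind}(a_{i})$-th entry of $T\overline{X}^{t}$, and Poisson summation over the subgroup $C^{\perp}$ constitute exactly the standard Fourier-analytic derivation, and each step is sound (since $\chi$ takes values in $\{\pm 1\}$, no complex-conjugation issues arise in the transform). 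The trade-off is clear: the paper's citation is shorter and defers the machinery to the literature, while your version makes explicit precisely where the generating character enters, namely in identifying the character group of $\R_{k,m}^{n}$ with $\R_{k,m}^{n}$ via $\overline{a}\mapsto\chi_{\overline{a}}$, and in showing that the character-theoretic annihilator of $C$ is exactly $C^{\perp}$ --- the key point there being that $\{\left\langle \overline{a},\overline{b}\right\rangle \mid \overline{b}\in C\}$ is an ideal of $\R_{k,m}$, so triviality of $\chi$ on it forces it to be the zero ideal. One small caveat: you still lean on \cite{jwood} for $(C^{\perp})^{\perp}=C$ and $|C|\cdot|C^{\perp}|=|\R_{k,m}|^{n}$; if you wanted the argument fully self-contained, both facts follow under the same identification from the annihilator-cardinality identity $|\mathrm{Ann}(H)|=|G|/|H|$ for subgroups $H$ of a finite abelian group $G$, so this is a presentational choice rather than a gap.
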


Putting $X_1=x$ and $X_i=y$ for all $i \geq 2$, we obtain the MacWilliams identity for the Hamming weight enumerator:
\begin{theorem}
\begin{equation*}
\begin{tabular}{l}
$W_{C^{\perp }}(x,y)=\frac{1}{\left\vert C\right\vert }W_{C}(x+(\left\vert
\R_{k,m}\right\vert -1)y,x-y)$,
\end{tabular}%
\end{equation*}%
where $W_{C}(x,y)\ $is Hamming weight enumerator of a code $C$ over $%
\R_{k,m}^{n}$ in the usual way,%
\begin{equation*}
\begin{tabular}{l}
$W_{C}(x,y)=\sum\limits_{c\in C}x^{n-w_{H}(c)}y^{w_{H}(c)}.$%
\end{tabular}%
\end{equation*}
\end{theorem}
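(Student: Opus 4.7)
The plan is to deduce the claim directly from the complete weight enumerator identity by the substitution $X_1=x$ and $X_i=y$ for $i\geq 2$, where we fix an ordering of $\R_{k,m}$ so that $g_1=0$. Under this substitution the left-hand side $cwe_{C^\perp}(\overline{X})$ collapses to $W_{C^\perp}(x,y)$ because $n_1(\overline{c})$ counts the zero coordinates of $\overline{c}$ and $\sum_{i\geq 2}n_i(\overline{c})=w_H(\overline{c})$. All the genuine work then lies in determining the vector $T\cdot \overline{X}^t$ after the substitution.

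The $i$-th entry of $T\cdot \overline{X}^t$ equals $\chi(0)\,x + y\sum_{j\geq 2}\chi(g_ig_j) = x + y(S_i-1)$, where $S_i := \sum_{g\in \R_{k,m}}\chi(g_ig)$. I would split into two cases. When $g_i=0$, $S_i=|\R_{k,m}|$, so the entry becomes $x+(|\R_{k,m}|-1)y$. When $g_i\neq 0$, I claim $S_i=0$: the map $g\mapsto \chi(g_ig)$ is an additive character of $(\R_{k,m},+)$ whose image is $\chi(g_i\R_{k,m})$. Since $g_i\neq 0$, the principal ideal $g_i\R_{k,m}$ is a non-zero ideal, and the generating character property established in the preceding theorem forces $\chi$ to be non-trivial on it. Hence $g\mapsto \chi(g_ig)$ is a non-trivial character of a finite abelian group, and by character orthogonality $S_i=0$, so the $i$-th entry reduces to $x-y$.

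Substituting these values back, each codeword $\overline{c}$ contributes $(x+(|\R_{k,m}|-1)y)^{n-w_H(\overline{c})}(x-y)^{w_H(\overline{c})}$ to $cwe_C(T\cdot \overline{X}^t)$; summing over $\overline{c}\in C$ and dividing by $|C|$ yields $\frac{1}{|C|}W_C(x+(|\R_{k,m}|-1)y,\,x-y)$, which is exactly the desired right-hand side. The only step that requires a genuine argument rather than bookkeeping is the vanishing of the character sum $S_i$ for $g_i\neq 0$; this is precisely where the generating character property of $\chi$ is invoked, and it is the main (though quite mild) obstacle in the proof.
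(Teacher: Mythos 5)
Your proposal is correct and follows the same route as the paper, which simply states that the Hamming identity follows from the complete weight enumerator identity by putting $X_1=x$ and $X_i=y$ for $i\geq 2$; you have merely supplied the details (the evaluation of $T\cdot\overline{X}^{t}$ via the vanishing character sum over the nonzero ideal $g_i\R_{k,m}$) that the paper leaves implicit. No discrepancy to report.
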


Now, our goal is to describe MacWilliams identities for the Lee weight
enumerators of codes over $\R_{k,m}.$ Firstly, we define Lee weight
enumerator of a code $C$ over $\R_{k,m}^{n}$ as usual to be
\begin{equation*}
\begin{tabular}{l}
$Lee_{C}(z)=\sum\limits_{\overline{c}\in C}z^{w_{L}(\overline{c})}$%
\end{tabular}%
\end{equation*}%
where $w_{L}(\overline{c})$ denotes the Lee weight of a codeword. Then we
have following theorem:

\begin{theorem}
Let $C$ be a linear code over $\R_{k,m}$ of length $n$ then
\begin{equation*}
\text{%
\begin{tabular}{l}
$Lee_{C^{\perp }}(z)=\frac{1}{\left\vert C\right\vert }(1+z)^{kmn}Lee_{C}%
\left( \frac{1-z}{1+z}\right) .$%
\end{tabular}%
}
\end{equation*}
\end{theorem}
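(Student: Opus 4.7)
The plan is to reduce this to the classical binary MacWilliams identity by transporting everything through the Gray map $\phi_{km}$. The key observation is that Theorem \ref{dual} gives us two things simultaneously: $\phi_{km}(C^{\perp}) = \phi_{km}(C)^{\perp}$, and the Lee weight distribution of $C$ coincides with the Hamming weight distribution of the binary code $\phi_{km}(C)$ of length $kmn$. In particular, one immediately gets the identity
\begin{equation*}
Lee_{C}(z) = \sum_{\overline{c} \in C} z^{w_L(\overline{c})} = \sum_{c' \in \phi_{km}(C)} z^{w_H(c')} = W_{\phi_{km}(C)}(1,z),
\end{equation*}
and likewise $Lee_{C^{\perp}}(z) = W_{\phi_{km}(C)^{\perp}}(1,z)$, where $W$ denotes the binary Hamming weight enumerator.

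Next, I would invoke the standard binary MacWilliams identity applied to the binary linear code $D := \phi_{km}(C)$ of length $kmn$, namely
\begin{equation*}
W_{D^{\perp}}(x,y) = \frac{1}{|D|} W_D(x+y, x-y).
\end{equation*}
Specializing to $x=1$, $y=z$ gives $W_{D^{\perp}}(1,z) = \frac{1}{|D|} W_D(1+z, 1-z)$. Since $|D| = |\phi_{km}(C)| = |C|$ (because $\phi_{km}$ is a bijection, being an $\F_2$-linear isometry between spaces of the same size), the left-hand side is exactly $Lee_{C^{\perp}}(z)$ and the prefactor matches.

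It then remains to massage $W_D(1+z, 1-z)$ into the desired shape. Factoring out $(1+z)^{kmn}$ from each term gives
\begin{equation*}
W_D(1+z,1-z) = \sum_{c' \in D} (1+z)^{kmn - w_H(c')}(1-z)^{w_H(c')} = (1+z)^{kmn} \sum_{c' \in D} \left(\frac{1-z}{1+z}\right)^{w_H(c')},
\end{equation*}
and the remaining sum is $W_D\!\left(1, \frac{1-z}{1+z}\right) = Lee_C\!\left(\frac{1-z}{1+z}\right)$ by the identification above. Putting the pieces together yields the stated identity.

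There is no serious obstacle here: every ingredient has already been established. The only thing to be careful about is the bookkeeping at the two substitutions $y=z$ and $y = (1-z)/(1+z)$, and the clean use of Theorem \ref{dual} to justify both the equality of weight enumerators through $\phi_{km}$ and the interchange of dualization with the Gray map. In fact, this argument shows why proving the duality-preserving property of $\phi_{km}$ was the real work: once that is in hand, the Lee-weight MacWilliams identity is essentially free from the binary case.
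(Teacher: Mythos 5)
Your proposal is correct and follows essentially the same route as the paper: both transport the problem through the Gray map $\phi_{km}$, use the duality-preserving property $\phi_{km}(C^{\perp})=\phi_{km}(C)^{\perp}$ together with the equality of the Lee and Hamming weight distributions, and then apply the classical binary MacWilliams identity to $\phi_{km}(C)$. The only cosmetic difference is that you derive the one-variable binary identity from the two-variable form by explicit substitution, whereas the paper invokes it directly.
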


\begin{proof}
As we know $\phi_{km}$ is a distance preserving map. Therefore
\begin{equation*}
\begin{tabular}{l}
$Lee_{C^{\perp }}(z)=W_{\phi_{km}(C^{\perp })}(z)$
\end{tabular}%
\end{equation*}%
where $W_C(z)$ denotes the hamming weight enumerator of a code $C$.
Recall that we have $\phi_{km}(C^{\perp })=\phi_{km}(C)^{\perp }$ by
Theorem \ref{self-duality}. So we get%
\begin{equation*}
\begin{tabular}{ll}
$Lee_{C^{\perp }}(z)$ & $=W_{\phi _{km}(C)^{\perp }}(z)$ \\
& $=\frac{1}{\left\vert \phi _{km}(C)\right\vert }(1+z)^{kmn}W_{\phi
_{km}(C)}\left( \frac{1-z}{1+z}\right) $ \\
& $=\frac{1}{\left\vert C\right\vert }(1+z)^{kmn}Lee_{C}\left( \frac{1-z}{1+z%
}\right) .$%
\end{tabular}%
\end{equation*}
\end{proof}

\section{Projections, Lifts and Constructions of Self-Dual Codes Over $R_{k,m}$}

\subsection{Projections and Lifts}

Recall that elements of $\R_{k,m}$ can be shown in the form $\sum\limits
_{\substack{ 0\leq i\leq k-1  \\ 0\leq j\leq m-1}}c_{ij}u^{i}v^{j}.$ Now
define a projection of $\R_{k,m}$ to $\mathbb{F}_{2}.$

\begin{definition}
Let $\mu $ be a map from $\R_{k,m}$ to $\mathbb{F}_{2}$ such that%
\begin{equation*}
\begin{tabular}{l}
$\mu (\sum\limits_{\substack{ 0\leq i\leq k-1  \\ 0\leq j\leq m-1}}%
c_{ij}u^{i}v^{j})=c_{00}$%
\end{tabular}%
\end{equation*}%
Then $\mu $ is an epimorphism and is called a natural projection of $\R_{k,m}$
to $\mathbb{F}_{2}.$
\end{definition}

Let $C$ be a linear code over $\R_{k,m}$ and $\mu (C)$ be its projection.
Then $C$ is said to be a lift of $\mu (C).$ Our general strategy in constructing self-dual codes over $\R_{k,m}$ will be to lift from good binary self-dual codes. Now notice that if for $\overline{x}, \overline{y} \in \R_{k,m}^n$, we have $\langle \overline{x}, \overline{y} \rangle = 0$, then $\overline{x}_{00}\cdot \overline{y}_{00} = \mu(\overline{x})\cdot \mu(\overline{y}) = 0$. Thus we have the following result:
\begin{theorem}
Let $C$ be a self-dual code over $\R_{k,m}$ of length $n$. Then $\mu(C)$ is
a self orthogonal code over $\mathbb{F}_{2}$ of length $n.$
\end{theorem}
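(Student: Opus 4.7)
The plan is to exploit two facts about the projection $\mu$: that it is a ring epimorphism, and that this promotes (coordinatewise) to a map $\R_{k,m}^n \to \F_2^n$ which commutes with the standard Euclidean inner product. Once these are in place, self-orthogonality of $\mu(C)$ follows immediately from self-duality of $C$.

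First I would verify that $\mu$ respects multiplication. Writing $x=\sum c_{ij}u^iv^j$ and $y=\sum d_{ij}u^iv^j$, the multiplication formula displayed in Section~2 shows that the $u^0v^0$-coefficient of $xy$ is exactly $c_{00}d_{00}$, so $\mu(xy)=\mu(x)\mu(y)$; additivity is clear. Consequently, for any $\bar{x},\bar{y}\in \R_{k,m}^n$,
\[
\mu\bigl(\langle \bar{x},\bar{y}\rangle\bigr) \;=\; \mu\Bigl(\sum_{i=1}^{n} x_i y_i\Bigr) \;=\; \sum_{i=1}^{n} \mu(x_i)\mu(y_i) \;=\; \langle \mu(\bar{x}),\mu(\bar{y})\rangle,
\]
where the right-hand inner product is the binary one. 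This is exactly the observation the authors flag in the paragraph preceding the statement.

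With these ingredients the theorem is a one-line deduction. By self-duality, $\langle \bar{x},\bar{y}\rangle = 0$ in $\R_{k,m}$ for every $\bar{x},\bar{y}\in C$; applying $\mu$ and using the intertwining identity yields $\langle \mu(\bar{x}),\mu(\bar{y})\rangle=0$ in $\F_2$. Since $\mu(C)$ is a binary linear code (the $\F_2$-linear image of an $\R_{k,m}$-submodule under the $\F_2$-linear map $\mu$), we conclude $\mu(C)\subseteq \mu(C)^\perp$.

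There is no genuine obstacle here; the only point worth flagging is one of \emph{sharpness}. One might hope that $\mu(C)$ is actually self-dual, but $\mu(C)$ is generally strictly smaller than $2^{n/2}$: the kernel $C\cap \ker(\mu)^n$ is typically nontrivial (it contains, for instance, anything in $C$ whose entries all lie in $\langle u,v\rangle$). So the correct conclusion is self-orthogonality, exactly as stated.
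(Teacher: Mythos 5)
Your proposal is correct and follows the same route as the paper: the paper's justification is precisely the remark preceding the theorem, namely that $\mu$ preserves products (the $u^0v^0$-coefficient of $xy$ is $c_{00}d_{00}$), hence intertwines the Euclidean inner products, so orthogonality in $C$ descends to $\mu(C)$. Your added observations on linearity of $\mu(C)$ and on why only self-orthogonality (not self-duality) can be concluded are sound refinements of the same argument.
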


\begin{corollary}\label{projselfdual}
If $C$ is a free self-dual code over $\R_{k,m}$ of length $2n$, that is $C$ is generated by a matrix of the form $[I_n|A]$, then $\mu(C)$ is a binary self-dual code of length $2n$.
\end{corollary}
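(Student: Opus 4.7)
The plan is to combine the preceding theorem (which already gives self-orthogonality of $\mu(C)$) with a dimension count. Since self-duality of a binary code of length $2n$ is equivalent to self-orthogonality together with dimension exactly $n$, all that remains after invoking the previous theorem is to pin down $\dim_{\F_2}\mu(C)$.

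First I would observe that $\mu\colon \R_{k,m}\to \F_2$ is a ring epimorphism, and it extends componentwise to a surjective $\F_2$-linear map $\R_{k,m}^{2n}\to \F_2^{2n}$; applying this to rows of generator matrices shows that if $C$ is generated by $[I_n\mid A]$ over $\R_{k,m}$, then $\mu(C)$ is the binary row space of $[\mu(I_n)\mid \mu(A)]=[I_n\mid \mu(A)]$, where $\mu(A)$ is the entrywise projection. Because the left block is the $n\times n$ identity, this generator matrix has rank exactly $n$ over $\F_2$, so $\dim_{\F_2}\mu(C)=n$ and $|\mu(C)|=2^n$.

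Next I would apply the previous theorem to conclude $\mu(C)\subseteq \mu(C)^{\perp}$. Since $\mu(C)$ is a binary linear code of length $2n$, standard duality gives $|\mu(C)|\cdot|\mu(C)^{\perp}|=2^{2n}$, so $|\mu(C)^{\perp}|=2^n=|\mu(C)|$. The inclusion $\mu(C)\subseteq \mu(C)^{\perp}$ together with equal cardinalities forces $\mu(C)=\mu(C)^{\perp}$, which is the desired self-duality.

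There is no real obstacle here; the only point that needs a word of care is that the freeness hypothesis is used precisely to guarantee that the generator matrix can be taken in the standard form $[I_n\mid A]$, so that its projection is a genuine generator matrix for $\mu(C)$ of full rank $n$. Without freeness one could only deduce $\dim_{\F_2}\mu(C)\le n$, which would be insufficient to upgrade self-orthogonality to self-duality.
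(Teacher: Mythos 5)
Your proof is correct and follows exactly the route the paper intends: the preceding theorem gives self-orthogonality of $\mu(C)$, and the freeness hypothesis (generator matrix $[I_n\mid A]$) projects to a binary generator matrix $[I_n\mid \mu(A)]$ of rank $n$, so the dimension count upgrades self-orthogonality to self-duality. The paper leaves this corollary unproved, and your argument supplies precisely the standard justification, with the correct observation about where freeness is actually needed.
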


The following theorem gives a bound between the minimum Lee weight of a code and the minimum Hamming weight of its projection:

\begin{theorem}\label{bound}
Let $C$ be a linear code over $\R_{k,m}$ of length $n$ with minimum Lee
weight $d$ and $\mu (C)$ be its projection to $\mathbb{F}_{2}$. If $%
d^{\prime }$denotes the minimum Hamming weight of $\mu(C),$ we have $d\leq
2md^{\prime }.$
\end{theorem}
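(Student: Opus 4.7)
The plan is to exhibit, starting from a minimum Hamming-weight codeword $\overline{c}' \in \mu(C)$, an explicit nonzero codeword of $C$ whose Lee weight is at most $2m \cdot d'$; minimality of $d$ then yields $d \leq 2md'$.

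First, pick $\overline{c}' \in \mu(C)$ with $w_H(\overline{c}') = d'$ and choose any lift $\overline{c} \in C$ with $\mu(\overline{c}) = \overline{c}'$, which exists because $\mu|_C \colon C \to \mu(C)$ is surjective. The key structural observation is that $u^{k-1}v^{m-1}$ generates the socle of $\R_{k,m}$ and annihilates every basis element $u^{i}v^{j}$ other than $1$. Consequently, writing each coordinate of $\overline{c}$ in the form (\ref{represent}), the codeword $\overline{c}'' := u^{k-1}v^{m-1}\overline{c} \in C$ has every entry equal to either $0$ or $u^{k-1}v^{m-1}$, with exactly $d'$ nonzero entries, namely those in the support of $\mu(\overline{c}) = \overline{c}'$. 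Hence $w_L(\overline{c}'') = d' \cdot w_L(u^{k-1}v^{m-1})$.

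The remaining step is to bound $w_L(u^{k-1}v^{m-1}) \leq 2m$. Viewing $u^{k-1}v^{m-1}$ in its $\R_{k,1}$-decomposition $\sum_{i=0}^{m-1} c_{ki} v^{i}$, only the top coefficient $c_{k,m-1} = u^{k-1}$ is nonzero; plugging this into the recursive formula for $\phi_{km}$, exactly two of the $m$ blocks of length $k$ (those with index ranges $[0,m-1]$ and $[1,m-1]$, the only ones containing $m-1$) receive a nonzero contribution, each equal to $\phi_{k1}(u^{k-1})$. A direct unwinding of $\phi_{k1}$ shows that $\phi_{k1}(u^{k-1})$ has Hamming weight $2$: of its coordinates, only the first two (the sums over $[0,k-1]$ and $[1,k-1]$) contain the nonzero entry $\overline{a}_{k-1}$. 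This yields $w_L(u^{k-1}v^{m-1}) \leq 4 \leq 2m$ when $m \geq 2$, and $w_L(u^{k-1}) = 2 = 2m$ when $m = 1$. Combining, $d \leq w_L(\overline{c}'') \leq 2md'$.

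The principal obstacle is the careful bookkeeping of which blocks of the recursive Gray map $\phi_{km}$ receive nonzero contributions, made slightly awkward by the parity case-splits in the definitions of $\phi_{k1}$ and $\phi_{km}$; once it is confirmed that in every case exactly two blocks contribute (each of weight $2$), the rest is a clean structural argument using that the socle generator $u^{k-1}v^{m-1}$ annihilates every non-constant basis monomial.
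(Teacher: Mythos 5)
Your proof is correct, and it follows the same strategy as the paper's: lift a minimum-weight codeword of $\mu(C)$ to $C$, multiply by the socle generator $u^{k-1}v^{m-1}$ to annihilate every term except the $c_{00}$-part, and bound the Lee weight of the resulting nonzero codeword. Where you diverge is the weight computation, and there your version is the more accurate one. The paper asserts that the Gray image of $\overline{x}_{00}u^{k-1}v^{m-1}$ is the pattern $(\overline{x}_{00},\overline{x}_{00},\overline{0},\dots,\overline{0})$ repeated $m$ times, i.e.\ that $w_L(\overline{x}_{00}u^{k-1}v^{m-1})=2md'$ exactly. But, as your block-by-block analysis shows, only the first two of the $m$ blocks of $\phi_{km}$ --- the ones whose index ranges $[0,m-1]$ and $[1,m-1]$ contain the index $m-1$ --- are nonzero, and each equals $\phi_{k1}(u^{k-1}\overline{x}_{00})$, of weight $2d'$ since $\phi_{k1}(u^{k-1})=(1,1,0,\dots,0)$. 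So the true Lee weight is $4d'$ for $m\geq 2$ (and $2d'$ for $m=1$), not $2md'$; the two computations agree only when $m\leq 2$. Since $4\leq 2m$ for $m\geq 2$, the stated bound $d\leq 2md'$ still follows --- indeed your argument proves the stronger bound $d\leq 4d'$ --- but the paper's claimed equality is itself erroneous for $m\geq 3$. The only cosmetic gap in your write-up is the degenerate case $k=m=1$ (where $u^{k-1}v^{m-1}=1$ and there are not ``two blocks''), which is trivial since then $\mu$ is the identity and $d=d'$.
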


\begin{proof}
Let $\overline{x}_{00}\in \mu (C)$ with $w_{H}(\overline{x}_{00})=d^{\prime
}.$ Then there exists $c=\overline{x}_{00}+\sum\limits_{1\leq i+j\leq k+m-2}%
\overline{x}_{ij}u^{i}v^{j}\in C.$ But then $(u^{k-1}v^{m-1})c=\overline{x_{00}}%
u^{k-1}v^{m-1}\in C,$ because $C$ is linear code over $\R_{k,m}.$ Now,
\begin{equation*}
\begin{tabular}{l}
$w_{L}(\overline{x}_{00}u^{k-1}v^{m-1})=w_{H}(\underset{m\ \text{times }%
\overline{x}_{00},\overline{x}_{00},\overline{00}}{\underbrace{\overline{x}_{00},\overline{x}_{00},\overline{00},
\overline{x}_{00},\overline{x}_{00},
\overline{00},\cdots ,\overline{x}_{00},\overline{x}_{00},\overline{00}}})$%
\end{tabular}%
\end{equation*}%
where $\overline{00}=\underset{k-2\text{ times}}{\underbrace{\overline{0}%
,\cdots ,\overline{0}}}.$ That is, $w_{L}(u^{k-1}v^{m-1}\overline{x}_{00}%
)=2md^{\prime }$. This proves the theorem.
\end{proof}

\subsection{Self-Dual Codes Over $R_{k,m}$}

The double circulant and bordered double circulant constructions described in \cite{MacWilliams} have been used quite successfully by many researchers to obtain good self-dual binary codes. We can easily adopt these constructions to $\R{k,m}$:

\begin{definition}
Let $M$ be a circulant matrix over $\R_{k,m}$ of order $n$. Then the matrix $%
\left[ I_{n}\mid M\right] $ generates codes over $\R_{k,m}$ of length $2n.$
This is called the pure double circulant or double circulant construction.
\end{definition}

\begin{definition}
If $M$ be a circulant matrix over $\R_{k,m}$ of order $n-1.$ Then the matrix
\begin{equation*}
\begin{tabular}{l}
$\left[ I_{n}\left\vert
\begin{tabular}{llll}
$x$ & $y$ & $\cdots $ & $y$ \\
$z$ &  &  &  \\
$\vdots $ &  & $M$ &  \\
$z$ &  &  &
\end{tabular}%
\right. \right] $%
\end{tabular}%
\end{equation*}%
where $x,y,z\in \R_{k,m}$ generates codes over $\R_{k,m}$ of length $2n.$ This
is called bordered double circulant construction.
\end{definition}

Another construction, which is more recent was given in \cite{betsumiya} for self-dual codes over $%
\mathbb{F}_{p}.$ In \cite{georgiou} it was called two-block circulant
construction and later it was called the four circulant
construction. In \cite{karadenizyildizaydin} this construction was applied to the ring $\mathbb{F}_{2}+u\mathbb{F}_{2}
$ to obtain extremal binary self-dual codes.
Then following theorem can be proven in the exact same way as was done in \cite{karadenizyildizaydin}:
\begin{theorem}
Let $A$ and $B$ be circulant matrix over $\R_{k,m}$ of length $n$ such that $%
AA^{t}+BB^{t}=I_{n}.$ Then the matrix
\begin{equation*}
\begin{tabular}{l}
$\left[ I_{2n}\left\vert
\begin{array}{cc}
A & B \\
B^{t} & A^{t}%
\end{array}%
\right. \right] $%
\end{tabular}%
\end{equation*}%
generates self dual codes over $\R_{k,m}$ of length $4n.$ This is called
four-circulant construction.
\end{theorem}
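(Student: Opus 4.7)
The plan is the standard two-step argument for self-duality: first show $C \subseteq C^{\perp}$ by verifying $GG^{t}=0$, then upgrade the inclusion to equality via cardinality. Write the generator matrix as $G=[\,I_{2n}\mid M\,]$ with
\[
M=\begin{pmatrix} A & B \\ B^{t} & A^{t}\end{pmatrix}.
\]
Because $G$ contains a $2n\times 2n$ identity block, its rows are linearly independent over $\R_{k,m}$, so $C$ is a free module of rank $2n$ and $|C|=|\R_{k,m}|^{2n}$. I would record this first.

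Next I would compute $GG^{t}=I_{2n}+MM^{t}$ and expand block-wise:
\[
MM^{t}=\begin{pmatrix} AA^{t}+BB^{t} & AB+BA \\ B^{t}A^{t}+A^{t}B^{t} & B^{t}B+A^{t}A\end{pmatrix}.
\]
The three ingredients needed are (i) circulant matrices of a fixed order over a commutative ring form a commutative subalgebra, so any two circulants commute; (ii) the transpose of a circulant is again a circulant; and (iii) $\R_{k,m}$ has characteristic $2$. Combining (i) and (ii), every pair drawn from $\{A,A^{t},B,B^{t}\}$ commutes; by (iii) this makes the off-diagonal blocks $AB+BA=2AB=0$ and $A^{t}B^{t}+B^{t}A^{t}=0$, while the $(2,2)$-block equals $AA^{t}+BB^{t}$ after using $A^{t}A=AA^{t}$ and $B^{t}B=BB^{t}$. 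By hypothesis both diagonal blocks therefore equal $I_{n}$, so $MM^{t}=I_{2n}$ and $GG^{t}=2\,I_{2n}=0$. This gives $C\subseteq C^{\perp}$.

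To close the proof, I would invoke the Frobenius identity $|C|\cdot|C^{\perp}|=|\R_{k,m}|^{4n}$ already recorded in the preliminaries, which together with $|C|=|\R_{k,m}|^{2n}$ forces $|C^{\perp}|=|C|$; combined with the inclusion above this yields $C=C^{\perp}$. There is no real obstacle: the only step that requires a moment's care is the commutativity bookkeeping in the $MM^{t}$ computation, and once that is observed the argument is mechanical. This is precisely why the paper can defer to the $\mathbb{F}_{2}+u\mathbb{F}_{2}$ version in \cite{karadenizyildizaydin}, since neither the number of variables nor the non-chain structure of $\R_{k,m}$ intervenes anywhere in the calculation.
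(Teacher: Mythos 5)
Your proof is correct, and it is essentially the paper's intended argument: the paper omits the proof entirely, deferring to the $\mathbb{F}_{2}+u\mathbb{F}_{2}$ case in \cite{karadenizyildizaydin}, and what you have written out --- $GG^{t}=I_{2n}+MM^{t}=0$ via commutativity of circulants, circulant-ness of transposes, and characteristic $2$, followed by the cardinality count $|C|=|\R_{k,m}|^{2n}$ and the Frobenius identity --- is exactly that standard computation carried over verbatim. Nothing further is needed.
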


Now, we can give self-dual codes over $\mathbb{F}_{2}$ of some length
obtained from self-dual codes over $\R_{k,m}$ by the three constructions mentioned above, using the Magma computer algebra system (\cite{Magma}).

\subsubsection{The General idea}

The projection $\mu $ which is defined above preserves orthogonality. Also
the image of a double circulant self dual code over $R_{k,m}$ of length $n$
under $\mu $ must be a double circulant binary self dual code, the image of
a bordered-double circulant self dual code over $R_{k,m}$ of length $n$ under $%
\mu $ has to be a bordered-double circulant binary self dual code and the same is true for four circulant codes as well.

So, if we want to obtain a good self-dual code over  $\R_{k,m}$ by one of the construction methods above, we look at the projection and look for the best binary self-dual codes of the same length obtained from the same constructions. We then lift these codes over the ring $\R_{k,m}$ by taking lifting $1$ to a unit in $\R_{k,m}$ and $0$ to a non-unit in $\R_{k,m}$. Theorem \ref{bound} tells us exactly which binary codes to lift. Then an exhaustive search using a computer algebra reveals all the self-dual codes over $\R_{k,m}$ that can be obtained through these constructions. We then choose the best ones and take the Gray images to obtain good binary self-dual codes. In what follows we apply this idea to certain lengths and certain rings of the form $\R_{k,m}$. We only list the ones through which we have obtained extremal or near extremal binary self-dual codes.

Recall that for binary self-dual codes we have the following upper bounds on the minimum
Hamming distance:

\begin{theorem}
$($\cite{conwaysloane}$)$ Let $d_{I}(n)$ and $d_{II}(n)$ be the minimum distance of a
Type I and Type II binary code of length $n$, respectively. Then
\begin{equation*}
d_{II}(n)\leq 4\lfloor \frac{n}{24}\rfloor +4
\end{equation*}%
and
\begin{equation*}
d_{I}(n)\leq \left\{
\begin{array}{ll}
4\lfloor \frac{n}{24}\rfloor +4 & \text{if $n\not\equiv 22\pmod{24}$} \\
4\lfloor \frac{n}{24}\rfloor +6 & \text{if $n\equiv 22\pmod{24}$.}%
\end{array}%
\right.
\end{equation*}
\end{theorem}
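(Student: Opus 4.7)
The plan is to apply the Mallows--Sloane argument based on Gleason's theorem from the invariant theory of self-dual code weight enumerators. Since $C=C^{\perp}$, the Hamming weight enumerator $W_C(x,y)$ is fixed by the MacWilliams transformation, and in the Type II case it is additionally fixed by $(x,y)\mapsto(x,iy)$ because all codeword weights are divisible by $4$. Consequently $W_C$ lies in the ring of polynomial invariants of a finite matrix subgroup of $GL_2(\mathbb{C})$. Molien's formula identifies this ring as a free polynomial algebra on two generators: for Type II one may take
\[
g_8(x,y)=x^8+14x^4y^4+y^8, \qquad g_{24}(x,y)=x^4y^4(x^4-y^4)^4,
\]
while for Type I one may take $f_2=x^2+y^2$ and $f_8=x^2y^2(x^2-y^2)^2$. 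In each case the relevant weight enumerators fall in the appropriate algebra.

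The second step is the extremality argument. In the Type II case, expand
\[
W_C(x,y)=\sum_{a+3b=n/8}c_{a,b}\,g_8^{\,a}g_{24}^{\,b},
\]
giving $\lfloor n/24\rfloor+1$ unknowns $c_{a,b}$. The normalization $A_0=1$ together with the extremality conditions $A_4=A_8=\cdots=A_{d-4}=0$ form a linear system in these unknowns. For $d=4\lfloor n/24\rfloor+4$ this system has a unique solution, yielding the extremal enumerator. Attempting $d=4\lfloor n/24\rfloor+8$ adds one more condition; solving the resulting linear system forces $A_d<0$ for some $j$, contradicting $A_j\in\Z_{\geq 0}$. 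This establishes $d_{II}(n)\leq 4\lfloor n/24\rfloor+4$. The Type I argument is identical in spirit but uses $f_2,f_8$, and generically produces the same $4\lfloor n/24\rfloor+4$ bound. The exceptional case $n\equiv 22\pmod{24}$ occurs because for those lengths the extremal enumerator at $d=4\lfloor n/24\rfloor+4$ is non-unique (the counting of $(a,b)$ with $2a+8b=n$ against the number of vanishing conditions leaves one free parameter). Using this parameter one imposes the additional condition $A_{d+2}=0$ — legal because Type I weights need only be even — and verifies that a non-negative integer solution still exists; the first genuine sign obstruction then occurs at $d=4\lfloor n/24\rfloor+8$, giving $d_I(n)\leq 4\lfloor n/24\rfloor+6$.

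The main obstacle is the non-negativity analysis. One must extract analyzable formulas for the coefficients of $g_8^{a}g_{24}^{b}$ (and of $f_2^{a}f_8^{b}$) appearing in the forced enumerators, typically via residue calculus in the variable $t=y/x$ or by careful manipulation of the binomial sums arising from their multinomial expansions. Proving that the first ``excess'' coefficient is strictly negative reduces to showing that an explicit alternating sum has a definite sign as a function of $n$, which is a non-trivial combinatorial estimate. The most delicate point is the congruential case $n\equiv 22\pmod{24}$, where one must both exhibit the free parameter at the naive extremal level and verify that exploiting it to jump the bound by $+2$ still produces non-negative integer coefficients — this is the combinatorial heart of the Mallows--Sloane theorem and the reason the Type I bound branches.
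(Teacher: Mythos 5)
The paper offers no proof of this statement --- it is quoted from Conway--Sloane --- so your proposal can only be judged on its own merits. The Type~II half is the standard Gleason/Mallows--Sloane argument and is sound in outline: $W_C\in\mathbb{C}[g_8,g_{24}]$, the $\lfloor n/24\rfloor+1$ coefficients are pinned down by $A_0=1$ and $A_4=\cdots=A_{4\lfloor n/24\rfloor}=0$, and one then shows by an explicit binomial computation that the forced value of $A_{4\lfloor n/24\rfloor+4}$ is nonzero (in fact positive), so no Type~II code can have $d\geq 4\lfloor n/24\rfloor+8$.

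The Type~I half, however, has a genuine gap. The invariant ring for singly-even self-dual codes is $\mathbb{C}[f_2,f_8]$ with $\deg f_2=2$ and $\deg f_8=8$, so the decomposition of $W_C$ has $\lfloor n/8\rfloor+1$ free coefficients and the extremality argument ``identical in spirit'' to the Type~II one yields only the Mallows--Sloane bound $d_I(n)\leq 2\lfloor n/8\rfloor+2$. This is strictly weaker than the stated bound once $n\geq 24$: at $n=72$ it gives $d\leq 20$, whereas the theorem asserts $d\leq 16$. No amount of massaging the $f_2,f_8$ expansion closes this factor, because genuine Type~I codes with $d=2\lfloor n/8\rfloor+2$ exist for small $n$ and the obstruction for large $n$ is not visible in $W_C$ alone. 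The actual content of the cited Conway--Sloane paper is the \emph{shadow} technique: writing $C_0$ for the doubly-even subcode of index $2$, the shadow $S=C_0^{\perp}\setminus C$ has weight enumerator $W_C\bigl(\tfrac{x+y}{\sqrt{2}},\,\tfrac{i(x-y)}{\sqrt{2}}\bigr)$, and the requirements that its coefficients be non-negative integers supported on weights $\equiv n/2 \pmod 4$ (together with bounds on the number of low-weight shadow vectors) are what force $d_I(n)\leq 4\lfloor n/24\rfloor+4$, with the clean closed form including the $n\equiv 22\pmod{24}$ exception due to Rains. Relatedly, your explanation of the $n\equiv 22$ branch --- a free parameter left over in the Gleason decomposition --- is not where that exception comes from; it emerges from the shadow constraints. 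You would need to replace the entire Type~I paragraph with the shadow argument.
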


Self-dual codes meeting these bounds are called \textit{extremal}.
The existence of the Type II extremal code of length $72$ is still an open problem. So the best known binary self-dual codes of length $72$ for both Type I and Type II have parameters $[72,36,12]$.

\subsection{The extended binary Golay code}
The binary Golay code is probably the most well known code in the literature. It is a perfect $3$-error correcting code of parameters $[23,12,7]$. When we extend this code by a parity check symbol we obtain the Type II extremal self-dual code of parameters $[24,12,8]$. This code is unique up to equivalence and is the first example of the theoretically good self-dual codes of length $24k$. Using Assmus-Mattson theorem, it also leads designs with good parameters. There have been many different constructions for this code in the literature. \cite{Hurley}, \cite{Peng} are examples of these constructions. In \cite{Golay}, the extended Golay code was constructed from what we now call $\R_{2,2}$.

We have been able to give a construction for the extended Golay code using bordered double circulant construction over $\R_{3,1}$ and $\R_{3,2}$. Note that because of the Gray map, these are the only ones we can use (other than $\R_{2,1}$ and $\R_{2,2}$, which have already been used before). To construct it from $\R_{3,1}$, we need the binary code to lift to be of parameters $[8,4,4]$ which is also unique. All possible lifts of the bordered double circulant matrix that generates the $[8,4,4]$-code we were able to obtain the Golay code from $\R_{3,1}$ quite easily.
The following matrix turns out to generate the self-dual code over $\R_{3,1}$ whose binary image is the extended Golay code:
$$M = \left [ \begin{array}{cccccccc} 1 & 0 & 0 & 0 &u+u^2 &1+u &1+u &1+u \\
0 & 1 & 0 & 0 &1+u &u &1 &1+u^2 \\
0 & 0 & 1 & 0 &1+u &1+u^2 &u &1 \\
0 & 0 & 0 & 1 &1+u &1 &1+u^2 &u
\end{array} \right ]. $$

Doing the same thing over bordered double circulant binary codes of length $4$, which narrowed the search field rather considerably, we see that the following matrix generates the self-dual code over $\R_{3,2}$ whose binary image is the extended Golay code:

$$M' = \left [ \begin{array}{cccc} 1 & 0 & u+v & 1+u+v \\
0 & 1 & 1+u+u^2+v+uv & u+v
\end{array} \right ]. $$

\subsection{Extremal Self Dual Codes of Length 36}

Melchor and Gaborit have classified all the $41$ extremal binary $[36,18,8]$
self-dual codes in \cite{melchorgaborit}. We have obtained some of these
through $\R_{3,1}$ and $\R_{3,2}$ using some of the aforementioned constructions. To be precise, we found $6$ of the $41$ extremal self-dual codes from the constructions mentioned above. Now, since the four-circulant codes have to be of length divisible by $4$, the four circulant construction was applied only to the case of $\R_{3,1}$, whereas the double circulant and the bordered double circulant constructions were applied to both $\R_{3,1}$ and $\R_{3,2}$. In the case of $\R_{3,1}$ we searched for all the good binary self-dual codes of length $12$ (in this case with the parameters [12,6,4]) and then lifted them. In the case of $\R_{3,2}$ we lifted all the good binary self-dual codes of length $6$.

After searching over all possible lifts that are self-dual and taking Gray
images of these lifts we have obtained $6$ non-equivalent extremal self-dual
codes of length $36$. Two of these codes also have been obtained taking Gray
images of double circulant self-dual codes over $\R_{3,1}$ and $\R_{3,2}$ of
length $12.$

There are two weight enumerators are possible by \cite{conwaysloane}:%
\begin{equation*}
\begin{tabular}{l}
$W_{36,1}=1+225y^{8}+2016y^{10}+\cdots $%
\end{tabular}%
\end{equation*}%
and
\begin{equation*}
\begin{tabular}{l}
$W_{36,2}=1+289y^{8}+1632y^{10}+\cdots $%
\end{tabular}%
\end{equation*}%

\begin{table}[H]
\caption{binary [36,18,8] extremal self-dual codes obtained from double circulant constructions}
\begin{tabular}{|c|c|c|c|}\hline
Ring & First row of $M$ & $\left\vert Aut(C)\right\vert $ & $W_{36}(C)$ \\
\hline
$\R_{3,1}$ & $\left( u^{2}+u,1,u+1,u^{2}+u+1,u^{2}+u+1,1\right) $ & $864$ & $%
W_{36,1}$ \\ \hline
$\R_{3,2}$ & $\left( u+v,u^{2}+u+v,u^{2}v+uv+v+1\right) $ & $864$ & $W_{36,1}$
\\ \hline
$\R_{3,1}$ & $\left( u,1,u+1,u^{2}+u+1,u^{2}+u+1,1\right) $ & $12960$ & $%
W_{36,1}$ \\ \hline
$\R_{3,2}$ & $\left( u+v,u^{2}v+u^{2}+u+v,uv+1\right) $ & $12960$ & $W_{36,1}$
\\ \hline
\end{tabular}
\end{table}

\begin{table}[H]
\caption{binary [36,18,8] extremal self-dual codes obtained from bordered double circulant construction over $\R_{3,1}$}
\begin{tabular}{|c|c|c|c|}\hline
First row of $M$ & $(x,y,z)$ & $\left\vert Aut(C)\right\vert $ & $W_{36}(C)$
\\ \hline
$(u,1,1,u^{2}+1,u^{2}+1)$ & $(u,u+1,u+1)$ & $80$ & $W_{36,2}$ \\ \hline
$(u,1,u+1,u^{2}+u+1,1)$ & $(u^{2}+u,u+1,u+1)$ & $240$ & $W_{36,1}$ \\ \hline
\end{tabular}
\end{table}

\begin{table}[H]
\caption{binary [36,18,8] extremal self-dual codes obtained from four circulant construction over $\R_{3,1}$}
\begin{tabular}{|c|c|c|c|}\hline
First row of $A$ & First row of $B$ & $\left\vert Aut(C)\right\vert $ & $%
W_{36}(C)$ \\ \hline
$\left( u,1,u^{2}+1\right) $ & $\left( u+1,u+1,u+1\right) $ & $96$ & $%
W_{36,1}$ \\ \hline
$\left( u^{2}+u,1,u^{2}+1\right) $ & $\left( u+1,u+1,u+1\right) $ & $288$ & $%
W_{36,1}$ \\ \hline
$\left( u,1,u^{2}+1\right) $ & $\left( u+1,u+1,u^{2}+u+1\right) $ & $864$ & $%
W_{36,1}$ \\ \hline
$\left( u^{2}+u,1,u^{2}+1\right) $ & $\left( u+1,u+1,u^{2}+u+1\right) $ & $%
12960$ & $W_{36,1}$ \\ \hline
\end{tabular}
\end{table}

\subsection{Extremal Self Dual Codes of Length 66}
Extremal codes f length $66$ have parameter $[66,33,12]$ and their possible weight enumerators are as follows:
\begin{eqnarray*}
W_{66,1} &=&1+\left( 858+8\beta \right) y^{12}+\left( 18678-24\beta \right)
y^{14}+\cdots \text{ where }0\leq \beta \leq 778, \\
W_{66,2} &=&1+1690y^{12}+7990y^{14}+\cdots \text{ } \\
\text{and }W_{66,3} &=&1+\left( 858+8\beta \right) y^{12}+\left(
18166-24\beta \right) y^{14}+\cdots \text{ where }14\leq \beta \leq 756.
\end{eqnarray*}%

We have obtained $2$ non-equivalent extremal self-dual $[66,33,12]_{2}$
codes from double circulant matrices over $\R_{3,1}$. Because of Theorem \ref{bound}, we needed to search for the $[22,11,6]$ binary double circulant self-dual code, which we lifted to $\R_{3,1}$. After taking Gray images of these lifts we have obtained
the following extremal self-dual $[66,33,12]_{2}$ codes, which were also obtained in \cite{akaya} by a different construction:

\begin{table}[H]
\caption{binary [66,33,12] extremal self-dual codes obtained from double circulant construction over $\R_{3,1}$}
\begin{tabular}{|c|c|c|}\hline
First row of $A$ & $\left\vert Aut(C)\right\vert $ & %
$\beta$ in $W_{66,1}$ \\ \hline
$\left( u,u,u,1,u,u^{2}+u,1,u,1,1,1\right) $  & $220$& $22$\\ \hline
$\left( u^{2}+u,u^{2}+u,u^{2}+u,1,u^{2}+u,u,1,u^{2}+u,1,1,1\right)$  & $660$& $66$\\ \hline
\end{tabular}
\end{table}

\subsection{Best known Self-dual Codes of Length 72}
We know that an extremal Type I code of length $72$ must have a minimum distance $14$ while a Type II one must have $16$
as its minimum distance. But as yet the existence of these codes is an open problem. However a lot of work has gone towards classifying selþf-dual codes of  parameters $[72,36,12]$ of both types, especially Type II ones.

A number of singly even self-dual $[72,36,12]_{2}$ codes have been listed in \cite{akaya} and \cite{doughertykim}.\ In \cite{gulliverharada}, \cite{doughertygulliverharada}, \cite{rdontcheva}, \cite{bouyukliev} a great number of doubly even self-dual $[72,36,12]_{2}$ codes
are constructed.

We have constructed a lot of new Type I and Type II self-dual codes of length $72$ as images of self-dual codes over $\R_{3,1}$ and $\R_{3,2}$
via the double and bordered double circulant constructions. To do this, by using Theorem \ref{bound}, we have had to do an exhaustive search over all possible lifts of suitable binary self-dual codes of length $24$ or $12$. Since Type II codes are of more importance in the literature, and to save space, we have not listed all the Type I codes. But we give the parameters of the ones we have found and we have put the generators in a database that can be reached at.

Using the double circulant construction over $\R_{3,1}$ we were able to obtain
$117$ non-equivalent Type \textrm{I} binary $%
[72,36,12]$-codes and $43$ non-equivalent Type \textrm{%
II} binary $[72,36,12]$-codes. Using the bordered double circulant construction over $\R_{3,1}$ we found $27$ new Type II
and $36$ Type I self-dual $[72,36,12]$-codes. Moreover, by using the bordered double circulant construction over $\R_{3,2},$ we constructed $22$ new Type \textrm{%
I} $[72,36,12]$-codes and $35$ new Type \textrm{II} $[72,36,12]$-codes.

 In \cite{akaya}
two possible weight enumerators were given for Type \textrm{I} $%
[72,36,12]$-codes as follows:
\begin{equation*}
\begin{tabular}{l}
$W_{72,1}=1+2\beta y^{12}+(8640-64\gamma )y^{14}+(124281-24\beta +384\gamma
)y^{16}+$\textperiodcentered \textperiodcentered \textperiodcentered \\
$W_{72,2}=1+2\beta y^{12}+(7616-64\gamma )y^{14}+(134521-24\beta +384\gamma
)y^{16}+$\textperiodcentered \textperiodcentered \textperiodcentered%
\end{tabular}%
\end{equation*}%
where $\beta $ and $\gamma $ are parameters. The possible weight enumerators for a Type \textrm{II} $[72,36,12]$ code are given in \cite{doughertygulliverharada} as
\begin{equation*}
\begin{tabular}{l}
$W_{72}=1+(4398+\alpha )y^{12}+(197073-12\alpha )y^{16}+$\textperiodcentered
\textperiodcentered \textperiodcentered%
\end{tabular}%
\end{equation*}

The $76$ new binary Type \textrm{I }$[72,36,12]$ self-dual codes that were obtained
from double circulant matrices over $\R_{3,1}$ have all $48$ as the order of their automorphism group and the parameters for their weight enumerators in $W_{72,1}$ are $\gamma =$0 and $\beta =$185,\ 199,\ 201,\ 207,\ 225,\ 231,\
233,\ 247,\ 249,\ 255,\ 271,\ 273,\ 281,\ 295,\ 297,\ 303,\ 317,\ 319,\
321,\ 329,\ 339,\ 341,\ 353,\ 355,\ 375,\ 377,\ 463$,\ \gamma =$6 and $\beta
=$145,\ 153,\ 159,\ 165,\ 169,\ 171,\ 177,\ 181,\ 183,\ 193,\ 195,\ 201,\
207,\ 213,\ 217,\ 219,\ 225,\ 231,\ 237,\ 243,\ 253,\ 255,\ 265,\ 267,\
277,\ 279,\ 285,\ 291,\ 297,\ 303,\ 309,\ 315$,\ $321,\ 325,\ 327,\ 345,\
349,\ 351,\ 387,\ 411,\ 423$,\ \gamma =$24 and $\beta =$345,\ 393,\ 411,\
427,\ 429,\ 449,\ 453,\ 497$. $ 41 new binary Type \textrm{I }$[72,36,12]$
self-dual codes that were obtained from double circulant matrices over $%
\R_{3,1}$ had all $96$ as the order of their automorphism group and their parameters in $W_{72,1}$ were $\gamma =$0 and $\beta =$%
235,\ 259,\ 291,\ 315,\ 331,\ 339,\ 341,\ 355,\ 357,\ 363,\ 365,\ 379,\
381,\ 389,\ 403,\ 413,\ 427,\ 429,\ 435,\ 459,\ 485,\ 499,\ 507,\ 509$,$ $%
\gamma =$24 and $\beta =$331,\ 333,\ 339,\ 355,\ 357,\ 363,\ 381,\ 411,\
427,\ 453,\ 483,\ 499,\ 501,\ 525,\ 573$,\ \gamma =$48 and $\beta =$629,\ 653%
$.$

The order of the automorphism group of all the $36$ new binary Type \textrm{I }$[72,36,12]$ self-dual codes that were constructed from bordered-double circulant matrices over $\R_{3,1}$ is $44$ and their parameters in $W_{72,2}$ are given as $\gamma =$0 and $\beta =$88,\ 89,\ 111,\ 132,154,\
155,\ 165,\ 177,\ 187,\ 198,\ 199,220,\ 221,\ 231,\ 242,\ 243,\ 253,\ 264,\
265,\ 275,\ 286,\ 287,\ 297,\ 309,\ 319,\ 330,\ 331,\ 353,\ 363,\ 374,\
385,\ 397$,$\ 418,\ 462,\ 573,\ 617$.$

The $22$ new binary Type \textrm{I }$[72,36,12]$ self-dual codes that were obtained
from bordered-double circulant matrices over $\R_{3,2}$ had weight
enumerator of the form $W_{72,1}.$ The codes whose automorphism groups are of order $40$ have
$\gamma =$24 and $\beta =$363,\ 383$,\ \gamma =$12 and $\beta =$319,\ 359,\
379$,\ \gamma =$10 and $\beta =$249,\ 269,\ 289,\ 309$.$ in $W_{72,1}$. The codes whose automorphism groups are of order $20$ have $\gamma =$18 and $\beta =$176,\ 186,\
206,\ 236,\ 246,\ 276,\ 296, $\gamma =$16 and $\beta =$197,\ 237,\ 277$,\
\gamma =$9 and $\beta =$267,\ 287,\ 307 in $W_{72,1}$.

\begin{remark}
Due to their relative importance in the literature we only list the constructions of Type II self-dual codes of length $72$. However the reader can find the constructions for all the Type I new codes of length $72$ that are mentioned above, in the database \cite{tufekci}
\end{remark}

Before proceeding with the following tables in which we list all the new Type II binary self-dual codes of parameter $[72,36,12]$, we would like to introduce a notation to shorten the elements of $\R_{3,2}$, that can also be used for $\R_{3,1}$ as well. Note that $R_{3,2}$ is an $\F_2$-vector space with a basis that we can take as $\{u^2v, uv, v, u^2, u, 1\}$. Any element in $\R_{3,2}$ corresponds to a 6-bit string over $\F_2$ which we can consider as a base 2 expression of a natural number. With this notation every element in $\R_{3,2}$ corresponds to a integer from $0$ to $63$. For example $uv+v+u^2+1$ corresponds to $(011101)$ whose numerical value can be taken as $29$. Taking the basis as $\{u^2,u,1\}$ gives a numerical value from $0$ to $7$ to any element in $\R_{3,1}$.

\newpage
\begin{table}[H]
\caption{New Type II [72,36,12]
self-dual codes obtained from double circulant matrices over $\R_{3,1}$}
\begin{tabular}{|c|c|c|c|}
\hline
code $C_i$ & first row of $M$ & $\alpha$ in $W_{72}$  & $|Aut(C_i)|$ \\ \hline
$C_{1}$ & $\left( 2,0,4,3,6,1,3,3,5,4,7,5\right) $ & -3996 & 144 \\ \hline
$C_{2}$ & $\left( 0,6,0,3,6,3,3,7,1,6,5,7\right) $ & -3900 & 48\\ \hline
$C_{3}$ & $\left( 0,0,6,1,2,3,3,5,3,0,1,7\right) $ & -3888 & 48\\ \hline
$C_{4}$ & $\left( 0,6,4,3,2,3,3,7,1,6,1,7\right) $ & -3876 & 48\\ \hline
$C_{5}$ & $\left( 0,0,2,1,2,1,3,5,7,0,1,5\right) $ & -3852 & 48\\ \hline
$C_{6}$ & $\left( 2,0,2,3,6,7,3,3,3,4,7,3\right) $ & -3804 & 48\\ \hline
$C_{7}$ & $\left( 0,0,6,3,4,5,1,3,7,4,5,1\right) $ & -3768 & 48\\ \hline
$C_{8}$ & $\left( 0,0,0,1,6,3,3,5,1,0,5,7\right) $ & -3756 & 48\\ \hline
$C_{9}$ & $\left( 0,6,0,3,2,1,3,7,5,6,1,5\right) $ & -3744 & 48\\ \hline
$C_{10}$ & $\left( 0,4,4,1,2,3,3,5,1,4,1,7\right) $ & -3732 & 48\\ \hline
$C_{11}$ & $\left( 0,6,2,3,2,1,3,7,7,6,1,5\right) $ & -3708 & 48\\ \hline
$C_{12}$ & $\left( 2,0,0,3,6,3,3,3,1,4,7,7\right) $ & -3696 & 48\\ \hline
$C_{13}$ & $\left( 0,0,6,3,4,1,1,3,7,4,5,5\right) $ & -3672 & 48\\ \hline
$C_{14}$ & $\left( 0,0,4,3,4,5,1,3,5,4,5,1\right) $ & -3660 & 48\\ \hline
$C_{15}$ & $\left( 0,0,4,1,6,1,3,5,5,0,5,5\right) $ & -3624 & 48\\ \hline
$C_{16}$ & $\left( 2,0,2,3,6,3,3,3,3,4,7,7\right) $ & -3612 & 48\\ \hline
$C_{17}$ & $\left( 0,0,0,3,4,7,1,3,1,4,5,3\right) $ & -3600 & 7920\\ \hline
$C_{18}$ & $\left( 0,0,6,1,2,7,3,5,3,0,1,3\right) $ & -3600 & 48\\ \hline
$C_{19}$ & $\left( 0,0,4,1,2,3,3,5,1,0,1,7\right) $ & -3588 & 48\\ \hline
$C_{20}$ & $\left( 0,0,2,1,2,5,3,5,7,0,1,1\right) $ & -3564 & 48\\ \hline
$C_{21}$ & $\left( 0,4,2,1,2,5,3,5,7,4,1,1\right) $ & -3564 & 144\\ \hline
$C_{22}$ & $\left( 0,0,2,3,0,5,1,3,7,4,1,1\right) $ & -3552 & 48\\ \hline
$C_{23}$ & $\left( 0,0,2,3,4,7,1,3,3,4,5,3\right) $ & -3516 & 48\\ \hline
$C_{24}$ & $\left( 0,0,0,3,0,1,1,3,5,4,1,5\right) $ & -3492 & 48\\ \hline
$C_{25}$ & $\left( 0,0,2,1,6,3,3,5,3,0,5,7\right) $ & -3480 & 48\\ \hline
$C_{26}$ & $\left( 0,2,2,1,4,7,1,1,3,6,5,3\right) $ & -3468 & 48\\ \hline
$C_{27}$ & $\left( 0,0,0,1,2,1,3,5,5,0,1,5\right) $ & -3456 & 48\\ \hline
$C_{28}$ & $\left( 0,4,6,1,6,1,3,5,7,4,5,5\right) $ & -3444 & 48\\ \hline
$C_{29}$ & $\left( 2,0,4,3,2,7,3,3,1,4,3,3\right) $ & -3384 & 48\\ \hline
$C_{30}$ & $\left( 2,0,4,1,4,1,1,5,5,0,7,5\right) $ & -3336 & 48\\ \hline
$C_{31}$ & $\left( 0,2,6,3,2,7,3,7,3,2,1,3\right) $ & -3312 & 48\\ \hline
$C_{32}$ & $\left( 0,0,0,3,0,5,1,3,5,4,1,1\right) $ & -3300 & 48\\ \hline
$C_{33}$ & $\left( 0,0,0,3,4,3,1,3,1,4,5,7\right) $ & -3264 & 48\\ \hline
$C_{34}$ & $\left( 0,0,6,3,0,3,1,3,3,4,1,7\right) $ & -3252 & 48\\ \hline
$C_{35}$ & $\left( 0,4,2,1,6,3,3,5,3,4,5,7\right) $ & -3192 & 48\\ \hline
$C_{36}$ & $\left( 0,0,2,3,4,3,1,3,3,4,5,7\right) $ & -3180 & 48\\ \hline
$C_{37}$ & $\left( 0,4,4,1,2,7,3,5,1,4,1,3\right) $ & -3156 & 48\\ \hline
$C_{38}$ & $\left( 2,0,2,3,2,5,3,3,7,4,3,1\right) $ & -3120 & 48\\ \hline
$C_{39}$ & $\left( 0,2,0,3,6,3,3,7,1,2,5,7\right) $ & -3036 & 48\\ \hline
$C_{40}$ & $\left( 0,2,0,1,4,3,1,1,1,6,5,7\right) $ & -3024 & 48\\ \hline
$C_{41}$ & $\left( 0,2,0,1,4,7,1,1,1,6,5,3\right) $ & -2976 & 48\\ \hline
$C_{42}$ & $\left( 0,0,4,3,0,7,1,3,1,4,1,3\right) $ & -2952 & 48\\ \hline
$C_{43}$ & $\left( 2,0,0,3,2,5,3,3,5,4,3,1\right) $ & -2868 & 48\\ \hline
\end{tabular}%
\end{table}

\begin{table}[H]
\caption{New Type II [72,36,12]
self-dual codes obtained from bordered double circulant matrices over $\R_{3,1}$}
\begin{tabular}{|c|c|c|c|c|}
\hline
Code $C_i$& first row of $M$ & $x,y,z$ & $\alpha$ in $W_{72}$ & $|Aut(C_i|$ \\ \hline
$C_{44}$ & $\left( 0,2,0,3,2,1,5,2,5,7,1\right) $ & $\left( 6,3,3\right) $ &
-4134 & 132\\ \hline
$C_{45}$ & $\left( 0,0,2,3,6,3,1,2,7,5,3\right) $ & $\left( 6,3,3\right) $ &
-4002 & 44\\ \hline
$C_{46}$ & $\left( 0,2,4,1,2,1,3,4,7,5,1\right) $ & $\left( 0,1,1\right) $ &
-3996 & 44\\ \hline
$C_{47}$ & $\left( 0,0,2,3,2,3,1,6,3,5,7\right) $ & $\left( 6,3,3\right) $ &
-3870 & 44\\ \hline
$C_{48}$ & $\left( 0,0,2,3,6,5,1,4,3,1,5\right) $ & $\left( 0,1,1\right) $ &
-3864 & 44\\ \hline
$C_{49}$ & $\left( 0,0,4,1,6,7,1,4,7,5,5\right) $ & $\left( 6,3,3\right) $ &
-3804 & 44\\ \hline
$C_{50}$ & $\left( 0,0,0,3,6,1,7,4,1,3,3\right) $ & $\left( 6,3,3\right) $ &
-3738 & 44\\ \hline
$C_{51}$ & $\left( 0,0,0,1,2,5,5,2,3,5,7\right) $ & $\left( 0,1,1\right) $ &
-3732 & 44\\ \hline
$C_{52}$ & $\left( 0,0,4,1,4,3,5,6,5,1,3\right) $ & $\left( 6,3,3\right) $ &
-3672 & 44\\ \hline
$C_{53}$ & $\left( 0,0,0,3,4,5,3,6,3,7,5\right) $ & $\left( 6,3,3\right) $ &
-3606 & 44\\ \hline
$C_{54}$ & $\left( 0,0,2,3,4,1,5,6,1,5,3\right) $ & $\left( 0,1,1\right) $ &
-3600 & 44\\ \hline
$C_{55}$ & $\left( 0,0,0,1,2,3,5,4,7,1,1\right) $ & $\left( 6,3,3\right) $ &
-3540 & 44\\ \hline
$C_{56}$ & $\left( 0,0,2,1,4,5,7,4,7,3,3\right) $ & $\left( 6,3,3\right) $ &
-3474 & 44\\ \hline
$C_{57}$ & $\left( 0,0,0,1,6,5,5,6,7,5,3\right) $ & $\left( 0,1,1\right) $ &
-3468 & 44\\ \hline
$C_{58}$ & $\left( 0,0,0,1,0,7,1,6,5,5,7\right) $ & $\left( 6,3,3\right) $ &
-3408 & 44\\ \hline
$C_{59}$ & $\left( 0,2,2,1,4,5,5,6,7,7,5\right) $ & $\left( 6,3,3\right) $ &
-3342 & 132\\ \hline
$C_{60}$ & $\left( 0,0,0,3,0,5,3,2,7,7,1\right) $ & $\left( 6,3,3\right) $ &
-3342 & 44\\ \hline
$C_{61}$ & $\left( 0,0,0,3,0,3,3,4,3,3,7\right) $ & $\left( 0,1,1\right) $ &
-3336 & 44\\ \hline
$C_{62}$ & $\left( 0,0,6,3,4,3,1,4,1,5,5\right) $ & $\left( 6,3,3\right) $ &
-3276 & 44\\ \hline
$C_{63}$ & $\left( 0,2,4,3,2,5,1,6,1,3,1\right) $ & $\left( 6,3,3\right) $ &
-3210 & 44\\ \hline
$C_{64}$ & $\left( 0,0,2,1,0,3,7,6,7,7,1\right) $ & $\left( 0,1,1\right) $ &
-3204 & 44\\ \hline
$C_{65}$ & $\left( 0,2,2,1,6,1,1,4,5,3,3\right) $ & $\left( 6,3,3\right) $ &
-3144 & 44\\ \hline
$C_{66}$ & $\left( 0,0,4,3,2,5,3,4,1,7,7\right) $ & $\left( 6,3,3\right) $ &
-3078 & 44\\ \hline
$C_{67}$ & $\left( 0,0,0,3,6,7,7,2,5,7,5\right) $ & $\left( 0,1,1\right) $ &
-3072 & 44\\ \hline
$C_{68}$ & $\left( 0,2,4,3,0,1,5,4,3,7,7\right) $ & $\left( 6,3,3\right) $ &
-2946 & 44\\ \hline
$C_{69}$ & $\left( 0,2,4,1,4,5,7,2,1,1,3\right) $ & $\left( 0,1,1\right) $ &
-2940 & 44\\ \hline
$C_{70}$ & $\left( 0,0,4,3,4,7,7,4,3,7,3\right) $ & $\left( 0,1,1\right) $ &
-2808 & 44\\ \hline
\end{tabular}%
\end{table}

\begin{table}[H]
\caption{New Type II [72,36,12]
self-dual codes obtained from bordered double circulant matrices over $\R_{3,2}$}
\begin{tabular}{|c|c|c|c|c|}
\hline
Code $D_i$& first row of $M$ & $x,y,z$ & $\alpha $ in $W_{72}$ & $|Aut(D_i)|$  \\ \hline
$D_{1}$ & $\left( 8,17,27,59,21\right) $ & $\left( 12,17,25\right) $ & -3960
& 120 \\ \hline
$D_{2}$ & $\left( 8,33,27,59,37\right) $ & $\left( 12,17,25\right) $ & -3960
& 40\\ \hline
$D_{3}$ & $\left( 8,1,11,43,5\right) $ & $\left( 12,17,25\right) $ & -3840
& 40\\ \hline
$D_{4}$ & $\left( 24,1,27,59,5\right) $ & $\left( 28,49,57\right) $ & -3732
& 40\\ \hline
$D_{5}$ & $\left( 24,17,27,59,21\right) $ & $\left( 28,33,41\right) $ & -3720
& 40\\ \hline
$D_{6}$ & $\left( 24,1,11,43,5\right) $ & $\left( 28,49,57\right) $ & -3612
& 40\\ \hline
$D_{7}$ & $\left( 8,1,27,59,5\right) $ & $\left( 12,17,25\right) $ & -3600
& 40\\ \hline
$D_{8}$ & $\left( 24,17,27,59,21\right) $ & $\left( 28,49,57\right) $ & -3492
& 40\\ \hline
$D_{9}$ & $\left( 8,33,11,43,37\right) $ & $\left( 12,17,25\right) $ & -3480
& 40\\ \hline
$D_{10}$ & $\left( 24,17,11,43,21\right) $ & $\left( 28,49,57\right) $ &
-3372 & 40\\ \hline
$D_{11}$ & $\left( 8,49,27,59,53\right) $ & $\left( 12,17,25\right) $ & -3360
& 40\\ \hline
$D_{12}$ & $\left( 56,17,57,43,21\right) $ & $\left( 28,49,49\right) $ &
-3252 & 40\\ \hline
$D_{13}$ & $\left( 8,17,11,43,21\right) $ & $\left( 12,17,25\right) $ & -3240
& 40\\ \hline
$D_{14}$ & $\left( 10,1,25,29,37\right) $ & $\left( 42,11,11\right) $ & -3120
& 40\\ \hline
$D_{15}$ & $\left( 10,1,25,29,37\right) $ & $\left( 42,35,35\right) $ & -3000
& 40\\ \hline
$D_{16}$ & $\left( 10,1,57,61,37\right) $ & $\left( 42,11,11\right) $ & -2880
& 40\\ \hline
$D_{17}$ & $\left( 56,33,27,59,37\right) $ & $\left( 28,33,49\right) $ &
-3942 & 20\\ \hline
$D_{18}$ & $\left( 24,49,11,43,53\right) $ & $\left( 28,33,57\right) $ &
-3882& 20 \\ \hline
$D_{19}$ & $\left( 24,1,11,43,5\right) $ & $\left( 28,33,57\right) $ & -3822
& 20\\ \hline
$D_{20}$ & $\left( 10,9,1,37,13\right) $ & $\left( 10,11,59\right) $ & -3786
& 20\\ \hline
$D_{21}$ & $\left( 24,17,27,59,21\right) $ & $\left( 28,33,57\right) $ &
-3762 & 20\\ \hline
$D_{22}$ & $\left( 10,17,41,45,53\right) $ & $\left( 10,11,59\right) $ &
-3726 & 20\\ \hline
$D_{23}$ & $\left( 56,1,11,43,5\right) $ & $\left( 28,33,49\right) $ & -3702
& 20\\ \hline
$D_{24}$ & $\left( 24,49,27,59,53\right) $ & $\left( 28,33,57\right) $ &
-3642 & 20\\ \hline
$D_{25}$ & $\left( 10,9,49,21,13\right) $ & $\left( 10,11,59\right) $ & -3606
& 20\\ \hline
$D_{26}$ & $\left( 56,33,11,43,37\right) $ & $\left( 28,33,49\right) $ &
-3582 & 20\\ \hline
$D_{27}$ & $\left( 10,1,9,13,37\right) $ & $\left( 10,11,59\right) $ & -3546
& 20\\ \hline
$D_{28}$ & $\left( 24,17,11,43,21\right) $ & $\left( 28,33,57\right) $ &
-3522 & 20\\ \hline
$D_{29}$ & $\left( 24,33,11,43,37\right) $ & $\left( 28,33,57\right) $ &
-3462 & 20\\ \hline
$D_{30}$ & $\left( 10,1,41,45,37\right) $ & $\left( 10,11,59\right) $ & -3426
& 20\\ \hline
$D_{31}$ & $\left( 56,49,27,59,53\right) $ & $\left( 28,33,49\right) $ &
-3402 & 20\\ \hline
$D_{32}$ & $\left( 10,9,17,53,13\right) $ & $\left( 10,11,59\right) $ & -3366
& 20\\ \hline
$D_{33}$ & $\left( 10,25,17,53,29\right) $ & $\left( 10,11,59\right) $ &
-3306  & 20\\ \hline
$D_{34}$ & $\left( 10,1,25,29,37\right) $ & $\left( 10,11,59\right) $ & -3246
& 20\\ \hline
$D_{35}$ & $\left( 10,9,33,5,13\right) $ & $\left( 10,11,59\right) $ & -3186
& 20\\ \hline
\end{tabular}%
\end{table}

\end{document}